  \def\ps@pprintTitle{%
 \let\@oddhead\@empty
 \let\@evenhead\@empty
 \def\@oddfoot{\centerline{\thepage}}%
 \let\@evenfoot\@oddfoot}
\newtheorem{theorem}{Theorem}
\newtheorem{lemma}[theorem]{Lemma}
\newtheorem{proposition}[theorem]{Proposition}
\newenvironment{proof}[1][Proof]{\begin{trivlist}
\item[\hskip \labelsep {\bfseries #1}]}{\end{trivlist}}
\newcommand{\mcA}{{\mathcal{A}}}
\newcommand{\mcX}{{\mathcal{X}}}
\newcommand{\mcF}{{\mathcal{F}}}
\newcommand{\mcP}{{\mathcal{P}}}
\newcommand{\R}{{\mathds{R}}}
\newcommand{\E}{{\mathbb{E}}}
\renewcommand{\P}{{\mathbb{P}}}
\journal{TBA}
\begin{document}
\begin{frontmatter}

\title{Exploratory Control with Tsallis Entropy for Latent Factor Models
\tnoteref{t1}\\
{\small SIAM J. Financial Mathematics, Forthcoming}}
\tnotetext[t1]{The authors wish to thank participants on the OMI Machine Learning and Quantitative Finance Workshop, Research in Options 2022, AMS-EMS-SMF 2022, IMSI workshop on Advances in Optimal Decision Making under Uncertainty, the Bachelier World Congress 2022, and the Universit\'e Paris 1 Panth\'eon-Sorbonne Finance \& Modeling Webinar.}
\tnotetext[t2]{SJ would like to acknowledge
support from the Natural Sciences and Engineering Research Council of Canada (grants RGPIN-2018-05705 and
RGPAS-2018-522715).}

\author[author1]{Ryan Donnelly}\ead{ryan.f.donnelly@kcl.ac.uk}
\author[author2]{Sebastian Jaimungal}\ead{sebastian.jaimungal@utoronto.ca}
\address[author1] {Department of Mathematics, King's College London, \\ Strand, London, WC2R 2LS, United Kingdom}
\address[author2] {Department of Statistical Sciences, University of Toronto, \\ Toronto, Ontario, Canada, M5G 1Z5}

\date{}

\begin{abstract}
	We study optimal control in models with latent factors where the agent controls the distribution over actions, rather than actions themselves, in both discrete and continuous time. To encourage exploration of the state space, we reward exploration with Tsallis Entropy and derive the optimal distribution over states –- which we prove is $q$-Gaussian distributed with location characterized through the solution of an FBS$\Delta$E and FBSDE in discrete and continuous time, respectively. We discuss the relation between the solutions of the optimal exploration problems and the standard dynamic optimal control solution. Finally, we develop the optimal policy in a model-agnostic setting along the lines of soft $Q$-learning. The approach may be applied in, e.g., developing more robust statistical arbitrage trading strategies.
\end{abstract}

\begin{keyword}
	stochastic control, exploratory control, entropy regularization, reinforcement learning
\end{keyword}
\end{frontmatter}

\section{Introduction}

Reinforcement Learning (RL) is an important technique in optimization which has seen many recent developments. At the heart of RL is a trade-off between acting in a manner which is believed to be optimal and deliberately acting suboptimally with the purpose of better learning about the environment. This is often referred to in RL as exploitation vs. exploration, and it allows algorithms to hone in on an optimal strategy over repeated actions without a priori specifying a model for the system's dynamics.

In some situations it is possible to propose a simplistic model which is to be improved upon through exploration. In this paper, we consider a dynamic optimization problem where the agent desires to explore the state and control space by implementing actions which may not be optimal according to the proposed model. In this situation, repeatedly acting optimally according to the model will result in repeated observations of the system along state trajectories that result from the original model. If instead the agent implements a different action, then she may generate observations in the state and control space that may be used to propose a more precise model of the system, yielding improved future optimization.

Deliberately acting suboptimally comes with opportunity costs, therefore,  exploratory actions should be accompanied with incentives. We represent this incentive through two modifications of a standard dynamic optimal control problem using an approach similar to \cite{wang2020reinforcement} (see also \cite{guo2022entropy} and \cite{firoozi2022exploratory} for extensions in the mean field game context). First, the agent is allowed to randomize the action taken at any point in time, specifying the distribution of this random action. Second, the performance criterion is altered to reward the agent based on a measure of randomness corresponding to their selected distribution. We take this reward to be the Tsallis entropy, first introduced in \cite{tsallis1988possible} (see also \cite{tsallis2011nonadditive}), of the agent's action distribution which generalizes the Shannon entropy. We show that this generalization modifies the optimal distribution over actions to be $q$-Gaussian distributed rather than Gaussian, as is the case with Shannon entropy introduced in  \cite{wang2020reinforcement}.

In addition to applications in RL, exploratory control has been applied in other contexts. For example, in \cite{gao2022state} the randomization of controls is intended to emulate simulated annealing when optimizing non-convex functions, rather than used as a method for generating observations of infrequently visited states. In \cite{tang2022exploratory} the properties of the associated exploratory Hamilton-Jacobi-Bellman (HJB) equations are studied as is the rate of convergence of the optimal exploratory control as exploration goes to zero. Entropy regularization has also been employed in the context of Markov decision processes, as in \cite{neu2017unified} and \cite{geist2019theory}, or other areas related to optimization such as stochastic games as in \cite{savas2019entropy}, \cite{guan2020learning}, and \cite{hao2022entropy}.

Here, we primarily focus on dynamic control in discrete-time, in contrast to those above who study the continuous time case. One primary rationale for working in discrete-time is provide a more precise probabilistic description of the controlled dynamics with randomized actions. Moreover, we incorporate an unobserved latent factor into the environment dynamics to investigate how the exploratory control responds to imperfect observations of the system. We also study the continuous time version of the exploratory problem with latent factors and Tsallis entropy. We proceed to show that the solution to the (discrete- and continuous-time) exploratory control problem can be written in terms of the feedback form of the optimal control in the standard (non-randomized) control problem, and that the solution to the continuous-time problem may act as an approximation to the solution in discrete-time. This is useful in cases where the continuous-time problem has a tractable, implementable, solution, rather than undertaking the additional effort of solving a system of equations numerically in the discrete-time formulation.

After investigating how the exploratory control and its approximations behave in discrete and continuous time, we next illustrate how the Tsallis entropy may be incorporated into the performance criterion for (soft) $Q$-learning \citep{ziebart2010modeling, fox2015taming}. We show that the $Q$-function when evaluated on a distribution of actions are tied to those where the $Q$-function is evaluated on a Dirac mass of actions. 
This approach incentivizes further exploration over the ``greedy'' $Q$-learning algorithms which always select the action that is currently thought to be highest performing, and provides better performance compared to taking random arbitrary actions for the purposes of learning.

The remainder of the paper is organized as follows. In Section \ref{sec:discrete_model}, we propose a general linear-quadratic model with a latent factor, introduce exploratory controls, and solve for the optimal control distribution in feedback form. In Section \ref{sec:numerical_demonstration}, we demonstrate the behaviour of the optimal control distribution, in particular how it depends on the parameters which determine the exploration reward. In Section \ref{sec:continuous_model}, we formulate an analogous continuous-time model which is solved in feedback form. In Section \ref{sec:discrete_continuous_relation}, we show that the continuous-time model acts as an approximation to the discrete time model. Section \ref{sec:Q_learning} discusses the extension to $Q$-learning and Section \ref{sec:conclusion} concludes. Longer proofs are contained in the appendix.

\section{Discrete-Time Model}\label{sec:discrete_model}

We first introduce the dynamics and performance criterion for a standard control problem. Then we propose the model which generalizes the dynamics and performance criterion to allow for exploratory control processes. In the discrete-time formulation we work with a linear-quadratic framework because this makes it easier to establish the relation between the optimal standard control and the optimal exploratory control, as well as the relation between the discrete-time and continuous-time solutions.

\subsection{Standard Control Model}\label{sec:discrete_standard_model}

Let $T>0$ be a fixed time and divide the interval $[0,T]$ into $N$ equally sized subintervals of length $\Delta t = T/N$ with endpoints equal to $t_i = i\,\Delta t$ for $i\in\{0,1,\dots,N\}$. Let $(\Omega, \mathcal{G}, \{\mathcal{G}_n\}_{n=0}^N,\P)$ be a filtered probability space, and let stochastic processes $A=\{A_n\}_{n=0}^{N-1}$ and $M=\{M_n\}_{n=0}^{N}$ be adapted to the filtration $\{\mathcal{G}_n\}_{n=0}^N$, where $M$ is a martingale independent of $A$ with $M_0 = 0$. Additionally, we assume $\E\left[
\sum_{n=0}^{N-1} A_n^2\,\Delta t
\right]<\infty$ and $\E\left[M_n^2\right]<\infty$.

Let $Y=\{Y_n\}_{n=0}^N$ be a process given by
\begin{align}
	Y_n &= Y_0 + \sum_{i=0}^{n-1} A_i\,\Delta t + M_n\,,\label{eqn:unaffected_state}
\end{align}
with $Y_0$ constant. The controlled state process denoted $X^\nu=\{X^\nu_n\}_{n=0}^N$ satisfies
\begin{align}
	X^\nu_n &= Y_n + \sum_{i=0}^{n-1} \gamma_i\,\nu_i\,\Delta t\,,
\end{align}
with $\gamma = \{\gamma_n\}_{n=0}^{N-1}$ is a sequence of constants and where $\nu = \{\nu_n\}_{n=0}^{N-1}$ is the control process. The performance criterion for the standard control model is
\begin{align}
	J^\nu(X_0) &= \E\biggl[-B\,(X^\nu_N)^2 + \sum_{n=0}^{N-1} (D_n\,X^\nu_n\,\nu_n - C_n\,(X^\nu_n)^2 - K_n\,\nu_n^2)\,\Delta t\biggr]\,,\label{eqn:standard_performance}
\end{align}
where $C=\{C_n\}_{n=0}^{N-1}$, $D=\{D_n\}_{n=0}^{N-1}$, and $K=\{K_n\}_{n=0}^{N-1}$ are sequence of constants and $B$ is a constant. We assume that all parameters in the model are chosen such that the solution to equation \eqref{eqn:BSDeltaE_h2} satisfies $K_n - h^{(2)}_{n+1}\,\gamma_n^2\,\Delta t>0$.

We assume that an agent choosing the control $\nu$ is not able to directly observe the processes $A$ or $M$, but is able to observe the process $Y$. Thus, the control $\nu$ must be adapted to the filtration $\mcF$ generated by $Y$, which satisfies $\mcF_n\subset\mathcal{G}_n$ for each $n$. This makes the processes $A$ and $M$ latent factors and, through observations of the trajectory of $Y$, may only be approximately estimated. With the stated partial information setup in mind, we define the agent's dynamic value process
to be
\begin{align}
	H^{(0)}_n &= \sup_{\nu\in\mathcal{A}}\E\biggl[-B\,(X^\nu_N)^2 + \sum_{i=n}^{N-1} (D_i\,X^\nu_i\,\nu_i - C_i\,(X^\nu_i)^2  - K_i\,\nu_i^2)\,\Delta t\,\biggr|\,\mcF_n\biggr]\,,\label{eqn:standard_dynamic_value}
\end{align}
where $\mathcal{A}$ is the set of admissible strategies given by
\begin{align}
	\mathcal{A} &= \biggl\{\nu:\nu_n\in\mcF_n \mbox{ and } \E\biggl[\sum_{n=0}^{N-1}\nu_n^2\biggr]<\infty\biggr\}\,.
\end{align}
To apply standard dynamic programming techniques to obtain the optimal control, it is useful to rewrite the process $Y$ in \eqref{eqn:unaffected_state} in terms of only $\mcF_n$-adapted processes.
To this end, let $\widehat{A}_n = \E[A_n\,|\,\mcF_n]$ and 
$\widehat{M}_n = \sum_{i=0}^{n-1}Y_{i+1}-Y_i-\widehat{A}_i\,\Delta t$. Then we have
\begin{align}
	Y_n &= Y_0 + \sum_{i=0}^{n-1} \widehat{A}_i\,\Delta t + \widehat{M}_n\,,\label{eqn:Y_filter}
\end{align}
and $\widehat{M}$ is easily seen to be a martingale which is $\mcF_n$-adapted. An application of the dynamic programming principle (DPP) yields that the optimal control is given by
\begin{align}
	\nu^*_n &= \frac{\E[h^{(1)}_{n+1}\,|\,\mcF_n]\,\gamma_n + 2\,h^{(2)}_{n+1}\,\widehat{A}_n\,\gamma_n\,\Delta t + (2\,h^{(2)}_{n+1}\,\gamma_n + D_n)\, X^{\nu^*}_n }{2\,(K_n - h^{(2)}_{n+1}\,\gamma_n^2\,\Delta t)}\,,\label{eqn:nu_star}
\end{align}
where the $\mcF_n$-adapted process $h^{(1)}$ and the deterministic sequence $h^{(2)}$ satisfy the backward stochastic difference equation (BS$\Delta$E)
\begin{subequations}
	\begin{align}
	\begin{split}
	    		h^{(1)}_n =&\; \E[h^{(1)}_{n+1}|\mcF_n] + 2\,h^{(2)}_{n+1}\,\widehat{A}_n\,\Delta t 
		\\
		&\;+ \frac{(\E[h^{(1)}_{n+1}|\mcF_n]\,\gamma_n + 2\,h^{(2)}_{n+1}\,\widehat{A}_n\,\gamma_n\,\Delta t)\,(2\,h^{(2)}_{n+1}\,\gamma_n + D_n)}{2\,(K_n - h^{(2)}_{n+1}\,\gamma_n^2\,\Delta t)}\,\Delta t\,,
	\end{split}
    \label{eqn:BSDeltaE_h1}
	\\
	h^{(2)}_n =&\; h^{(2)}_{n+1} - C_n\,\Delta t + \frac{(2\,h^{(2)}_{n+1}\,\gamma_n + D_n)^2}{4\,(K_n - h^{(2)}_{n+1}\,\gamma_n^2\,\Delta t)}\,\Delta t\,,\label{eqn:BSDeltaE_h2}
	\end{align}%
	\label{eqn:BSDeltaE}%
\end{subequations}%
subject to the terminal conditions $h^{(1)}_N = 0$ and $h^{(2)}_N = -B$. The details of this computation are omitted, but they are similar to the derivation of the optimal exploratory control which are provided in Appendix \ref{proof:exploratory_discrete}.

\subsection{Control Model with Exploration}\label{sec:discrete_model_explore}

In this section, we modify the dynamics of the standard control problem to allow for exploratory control processes, and we also modify the performance criterion to incorporate an exploration reward. The exploration reward will incentivize  the agent to choose a randomized control at each time point, even when conditioning on all information available up to the time when the control is implemented. The randomization of the implemented control will also require a modification to the information structure of the problem. It is this modification of the information structure to incorporate randomized actions that makes the continuous time counterpart somewhat poorly specified, but the discrete-time problem precisely specified. To this end, let $\{u_n\}_{n=0}^{N-1}$ be a sequence of independent $\mathcal{U}(0,1)$ random variables, also independent from both $A$ and $M$, and denote the agent's control by $\pi = \{\pi_n\}_{n=0}^{N-1}$, $\pi_n|_{\mcF_n}\in\mcP(\R)$ where  $\mcP(\R)$ is the space of probability measures on $\R$. Hence, $\pi$ is a non-negative random field that integrates to unity. The dynamics of the previous section are modified to
\begin{align}
	F^\pi_n(x) &= \int_{-\infty}^x \pi_n(\nu)\,d\nu\,,\label{eqn:F_pi}
	\\
	\nu^\pi_n &= (F^\pi_n)^{-1}(u_n)\,,\label{eqn:nu_pi}
	\\
	X_n^\pi &= Y_n + \sum_{i=0}^{n-1} \gamma_i\,\nu^\pi_i\,\Delta t\,.\label{eqn:relaxed_X}
\end{align}
We extend the filtration that the agent adapts her strategy to so that she may observe the realizations of her past control. Specifically, we denote the $\sigma$-algebra $\mcF^u_n=\sigma\{(Y_k,u_i):k\leq n, i<n\}$, and the set of admissible exploratory controls $\mathcal{A}^r$ is defined to be $\mcF^u_n$-adapted random fields such that
\begin{align}
	\int_{-\infty}^\infty \pi_n(\nu)\,d\nu &= 1\,, & \E\left[
	\sum_{n=0}^{N-1}(\nu^\pi_n)^2\,
	\Delta t
	\right] &< \infty\,, & \pi_n(\nu) \geq 0\,.
\end{align}
By defining the $\sigma$-algebra $\mcF^u_n$ such that $u_n\notin\mcF^u_n$ and $u_n\in\mcF^u_{n+1}$, at time $n$ the agent does not know what her control $\nu_n$ will be, but she controls the distribution of this random quantity. In light of Equations \eqref{eqn:F_pi}-\eqref{eqn:nu_pi}, the distribution of $\nu_n$ conditional on $\mcF^u_n$ has density $\pi_n$ and in particular,
\begin{align}
	\E\left[\nu^\pi_n\,|\,\mcF^u_n\right] &= \int_{-\infty}^\infty \nu\,\pi_n(\nu)\,d\nu\,,
	\label{eqn:nu_bar}
	\\
	\E\left[
	(\nu_n^\pi)^2\,|\,\mcF^u_n
	\right] 
	&= \int_{-\infty}^\infty \nu^2\,\pi_n(\nu)\,d\nu\,.\label{eqn:nu_sigma}
\end{align}
Further, conditional on $\mcF^u_{n+1}$ at the next time step, she is able to observe $u_n$ and therefore $\nu^\pi_n$ as her implemented control has become realized.

We now modify the performance criterion from \eqref{eqn:standard_performance} to reward exploration in the choice of control. This reward comes in the form of either the Shannon or Tsallis entropy (see \cite{tsallis1988possible}) of the distribution of her control so that the new performance criterion is
\begin{align}
	J^\pi(X_0) &= \E\biggl[-B\,(X^\pi_N)^2 + \sum_{n=0}^{N-1} (D_n\,X^\pi_n\,\nu^\pi_n - C_n\,(X^\pi_n)^2 - K_n\,(\nu^\pi_n)^2)\,\Delta t + \lambda\, S_q[\pi_n]\,\Delta t\biggr]\,,\label{eqn:explore_performance}
\end{align}
where the parameter $\lambda>0$ controls the magnitude of the reward for exploration, and the quantity $S_q\left[\pi_n\right]$ is given by
\begin{align}\label{eqn:entropy}
	S_q[\pi_n] &= \left\{
	\begin{array}{cc}
	\displaystyle
	\frac{1}{q-1} \biggl(1 - \int_{-\infty}^{\infty} (\pi_n(\nu))^q\,d\nu   \biggr)\,, & q > 1/3, q\neq 1
	\\[1em]
	\displaystyle
	-\int_{-\infty}^{\infty} \pi_n(\nu)\,\log(\pi_n(\nu))\,d\nu\,, & q = 1
	\end{array}
	\right.\,.
\end{align}

For a fixed action distribution $\pi_n$, we have $\lim_{q\rightarrow 1}S_q[\pi_n] = S_1[\pi_n]$ which is why we reduce to the Shannon entropy in that limit. The bound $q>1/3$ is necessary so that the resulting optimal densities have finite variance.

\subsection{Optimal Exploratory Control}

In this section, we obtain the optimal exploratory control $\pi^*$ and discuss the relation between this sequence of conditional densities and the optimal control in the standard framework. We apply the DPP to the dynamic value process in the exploratory model (whose analogous value without exploration is \eqref{eqn:standard_dynamic_value}) given by
\begin{equation}
    \begin{split}
        	H^{(\lambda)}_n 
    	=&\;
    	\sup_{\pi\in\mathcal{A}^r}\E\biggl[-B\,(X^\pi_N)^2 + \sum_{i=n}^{N-1} (D_i\,X^\pi_i\,\nu^\pi_i - C_i\,(X^\pi_i)^2 - K_i\,(\nu^\pi_i)^2)\,\Delta t 
    	\\
    	&\qquad\qquad
    	+ \lambda\, S_q[\pi_i]\,\Delta t
    	\;\biggr|\;\mcF^u_n\biggr]\,.
    \end{split}
	\label{eqn:exploration_dynamic_value}
\end{equation}

\begin{proposition}[Discrete-time optimal exploratory control]\label{prop:discrete_optimal_exploration}
	Let $h^{(1)}$ and $h^{(2)}$ satisfy the BS$\Delta$E in Equation \eqref{eqn:BSDeltaE} and assume $K_n - h^{(2)}_{n+1}\,\gamma_n^2\,\Delta t>0$ for each $n$. Then, the optimal control density is given by
	\begin{align}\label{eqn:pi_star}
		\pi_n^*(\nu) &= 
		\left\{
		\begin{array}{cc}
    		\displaystyle	\biggl(\frac{q-1}{\lambda\,q}\biggr)^{\frac{1}{q-1}}\biggl(\psi_n - \frac{\lambda}{2\,\varsigma_n^2} \,(\nu - \mu_n)^2\biggr)_+^{\frac{1}{q-1}}\,, & q > 1
    		\\[1em]
    		\displaystyle	\frac{1}{\sqrt{2\,\pi\,\varsigma_n^2}}\,\exp\biggl\{-\frac{(\nu - \mu_n)^2}{2\,\varsigma_n^2}\biggr\}\,, & q = 1
    		\\[1em]
    		\displaystyle	\biggl(\frac{1-q}{\lambda\,q}\biggr)^{\frac{1}{q-1}}\biggl(\psi_n + \frac{\lambda}{2\,\varsigma_n^2} \,(\nu - \mu_n)^2\biggr)^{\frac{1}{q-1}}\,, & 1/3 < q < 1
		\end{array}
		\right.
	\end{align}
	where
	\begin{align}
		\mu_n &= 
		\displaystyle
		\frac{\E[h^{(1)}_{n+1}|\mcF_n]\,\gamma_n + 2\,h^{(2)}_{n+1}\,\widehat{A}_n\,\gamma_n\,\Delta t + (2\,h^{(2)}_{n+1}\,\gamma_n + D_n)\,X^{\pi^*}_n }{2\,(K_n - h^{(2)}_{n+1}\,\gamma_n^2\,\Delta t)}\,,\label{eqn:mu_n}
		\\[1em]
		\varsigma_n^2 &= \frac{\lambda}{2\,(K_n - h^{(2)}_{n+1}\,\gamma_n^2\,\Delta t)}\,,
		\\[1em]
		\psi_n &= 
		\left\{
		\begin{array}{cc}
		    \displaystyle
		    \Biggl[\frac{\Gamma(\frac{1}{q-1}+\frac{3}{2})}{\sqrt{\pi}\,\Gamma(\frac{1}{q-1}+1)}\biggl(\frac{\lambda\,q}{q-1}\biggr)^{\frac{1}{q-1}}\sqrt{K_n - h^{(2)}_{n+1}\,\gamma_n^2\,\Delta t}\Biggr]^{\frac{1}{\frac{1}{q-1}+\frac{1}{2}}}\,, & q > 1
		    \\[1em]
		    \displaystyle
			\Biggl[\frac{\Gamma(\frac{1}{1-q})}{\sqrt{\pi}\,\Gamma(\frac{1}{1-q}-\frac{1}{2})}\biggl(\frac{\lambda\,q}{1-q}\biggr)^{\frac{1}{q-1}}\sqrt{K_n - h^{(2)}_{n+1}\,\gamma_n^2\,\Delta t}\Biggr]^{\frac{1}{\frac{1}{q-1}+\frac{1}{2}}}\,, & 1/3 < q < 1
		\end{array}
		\right.
	\end{align}
\end{proposition}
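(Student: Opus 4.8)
The plan is to argue by backward induction on $n$, combining the dynamic programming principle for the value process $H^{(\lambda)}_n$ in \eqref{eqn:exploration_dynamic_value} with the ansatz that $H^{(\lambda)}_n$ is quadratic in the controlled state, $H^{(\lambda)}_n = h^{(2)}_n\,(X^\pi_n)^2 + h^{(1)}_n\,X^\pi_n + h^{(0)}_n$, with $h^{(2)}$ deterministic, $h^{(1)}$ being $\mcF_n$-adapted, and $h^{(0)}$ being $\mcF^u_n$-adapted; at $n = N$ this holds with $h^{(2)}_N = -B$, $h^{(1)}_N = 0$, $h^{(0)}_N = 0$. The DPP gives
\[
H^{(\lambda)}_n = \sup_{\pi_n}\E\!\left[\bigl(D_n X^\pi_n\nu^\pi_n - C_n(X^\pi_n)^2 - K_n(\nu^\pi_n)^2\bigr)\,\Delta t + \lambda\,S_q[\pi_n]\,\Delta t + H^{(\lambda)}_{n+1}\,\middle|\,\mcF^u_n\right].
\]
Substituting the ansatz for $H^{(\lambda)}_{n+1}$, writing $X^\pi_{n+1} = X^\pi_n + (\widehat A_n + \gamma_n\nu^\pi_n)\,\Delta t + (\widehat M_{n+1} - \widehat M_n)$ from \eqref{eqn:Y_filter}--\eqref{eqn:relaxed_X}, and using that $u_n$ and $\widehat M$ are independent of $\mcF^u_n$, that $h^{(2)}$ is deterministic, and that $\E[\,\cdot\,|\mcF^u_n] = \E[\,\cdot\,|\mcF_n]$ for $\mcF_{n+1}$-measurable random variables (since $u_n$ is independent of everything else), the $\nu$-dependence reduces, via \eqref{eqn:nu_bar}--\eqref{eqn:nu_sigma}, to the first two moments of $\pi_n$. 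Completing the square in $\nu$ and discarding terms independent of $\pi_n$, the inner problem becomes
\[
\sup_{\pi_n}\ \Bigl\{-a_n\int_{-\infty}^\infty(\nu - \mu_n)^2\,\pi_n(\nu)\,d\nu + \lambda\,S_q[\pi_n]\Bigr\}\,,
\]
with $a_n := K_n - h^{(2)}_{n+1}\,\gamma_n^2\,\Delta t > 0$ by hypothesis, and where $\mu_n$ is read off by matching coefficients and coincides with the expression in \eqref{eqn:mu_n} (equivalently the standard feedback control \eqref{eqn:nu_star} with $X^{\nu^*}$ replaced by $X^{\pi^*}$).

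Next I would solve this one-period variational problem over $\{\pi_n \geq 0:\int\pi_n(\nu)\,d\nu = 1\}$. Since the map $\pi_n \mapsto \lambda\,S_q[\pi_n]$ is strictly concave for $q > 1/3$ and the first term is linear in $\pi_n$, the objective is strictly concave on a convex set, so a density satisfying the Karush--Kuhn--Tucker conditions (with multiplier $\eta_n$ for the normalisation constraint) is the unique maximiser; the explicit candidate produced below shows the supremum is attained and finite. For $q \neq 1$ stationarity on $\{\pi_n > 0\}$ reads $-a_n(\nu - \mu_n)^2 - \tfrac{\lambda q}{q-1}\,\pi_n(\nu)^{q-1} = \eta_n$, so $\pi_n(\nu)^{q-1}$ is an affine function of $(\nu - \mu_n)^2$; absorbing $\eta_n$ into a constant $\psi_n$, for $q > 1$ this affine function must stay non-negative, which forces the $(\,\cdot\,)_+$ truncation and a compact support, whereas for $1/3 < q < 1$ the exponent $\tfrac{1}{q-1}$ is negative and no truncation occurs. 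Together with the identification $a_n = \lambda/(2\varsigma_n^2)$ this yields the two $q$-Gaussian densities in \eqref{eqn:pi_star}; the $q = 1$ case is analogous, with stationarity $-a_n(\nu - \mu_n)^2 - \lambda(\log\pi_n(\nu) + 1) = \eta_n$ giving a Gaussian of variance $\varsigma_n^2 = \lambda/(2a_n)$.

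It then remains to fix the constant $\psi_n$ and close the induction. Imposing $\int\pi_n^*(\nu)\,d\nu = 1$ and rescaling $\nu - \mu_n \propto t$ reduces the normalisation to the Beta integral $\int_{-1}^{1}(1 - t^2)^p\,dt = \sqrt{\pi}\,\Gamma(p+1)/\Gamma(p + \tfrac{3}{2})$ for $q > 1$ (with $p = \tfrac{1}{q-1} > 0$) and to $\int_{\R}(1 + t^2)^p\,dt = \sqrt{\pi}\,\Gamma(-p - \tfrac{1}{2})/\Gamma(-p)$ for $1/3 < q < 1$ (with $p = \tfrac{1}{q-1} < -\tfrac{3}{2}$), and solving for $\psi_n$ gives the stated formulae; the bound $p < -\tfrac{3}{2}$, i.e. $q > 1/3$, is precisely what makes $\int\nu^2\,\pi_n^*(\nu)\,d\nu$ finite. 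Finally, substituting $\pi_n^*$ back into the DPP, translation invariance makes both $\int(\nu - \mu_n)^2\,\pi_n^*(\nu)\,d\nu$ and $S_q[\pi_n^*]$ functions of $\varsigma_n^2$ and $q$ only --- hence deterministic, feeding only into $h^{(0)}_n$ --- while the $\mu_n^2$ term is affine-quadratic in $X^\pi_n$ and reproduces exactly the recursions \eqref{eqn:BSDeltaE_h1}--\eqref{eqn:BSDeltaE_h2}; this both verifies the quadratic ansatz and identifies $h^{(1)},h^{(2)}$ with the solution of \eqref{eqn:BSDeltaE}. I expect the main obstacle to be keeping the extended-filtration/partial-information bookkeeping precise enough to be sure the conditional expectations collapse as claimed, together with the Gamma-function normalisation integrals; the variational step itself is short once concavity is invoked.
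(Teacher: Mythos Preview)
Your proposal is correct and follows essentially the same route as the paper's proof: apply the DPP to the quadratic ansatz for $H^{(\lambda)}_n$, use the extended-filtration independence structure to reduce the $\pi_n$-dependence to the first two moments, complete the square, solve the resulting strictly concave one-period problem via Lagrange/KKT multipliers to obtain the $q$-Gaussian, fix $\psi_n$ by the Beta-type normalisation integrals, and substitute back to recover the recursions \eqref{eqn:BSDeltaE}. The only minor discrepancy is that the paper takes $h^{(0)}$ to be $\mcF_n$-adapted (not merely $\mcF^u_n$-adapted), which is what the recursion in fact yields and is needed to collapse $\E[h^{(0)}_{n+1}\,|\,\mcF^u_n]$; this would emerge automatically once you carry out the induction step in full.
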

\begin{proof}
	See Appendix \ref{proof:exploratory_discrete}.
\end{proof}

The optimal exploratory control in Proposition \ref{prop:discrete_optimal_exploration} includes a location parameter denoted by $\mu_n$. This location parameter has identical feedback form to the optimal control in the standard framework given in Equation \eqref{eqn:nu_star}. Thus, as the $q$-Gaussian distribution is symmetric, at each point in time, the exploratory control is chosen such that it is centered around what would be optimal in the standard approach. When $q=1$, the optimal density is Gaussian, the variance of the random control is directly proportional to the exploration reward parameter $\lambda$, which is expected because there is more incentive to deviate farther from the standard optimal control. When $q\neq 1$ the densities come from the class of $q$-Gaussian distributions which have distinctly different behaviour depending on whether $q$ is greater than or less than $1$. When $q>1$, the density has compact support centered at $\mu_n$ so that the density is only positive for $|\nu-\mu_n|<\sqrt{\psi_n/(K_n - h^{(2)}_{n+1}\,\gamma_n^2\,\Delta t)}$, whereas with $q<1$ the optimal distribution has unbounded support with fat tails. For $q\neq 1$ it is difficult to see directly how the variance of the random controls depends on the reward parameter $\lambda$ and the entropy parameter $q$, but these are plotted in Figure \ref{fig:densities}.

\begin{figure}
	\begin{center}
		{\includegraphics[trim=140 240 140 240, scale=0.4]{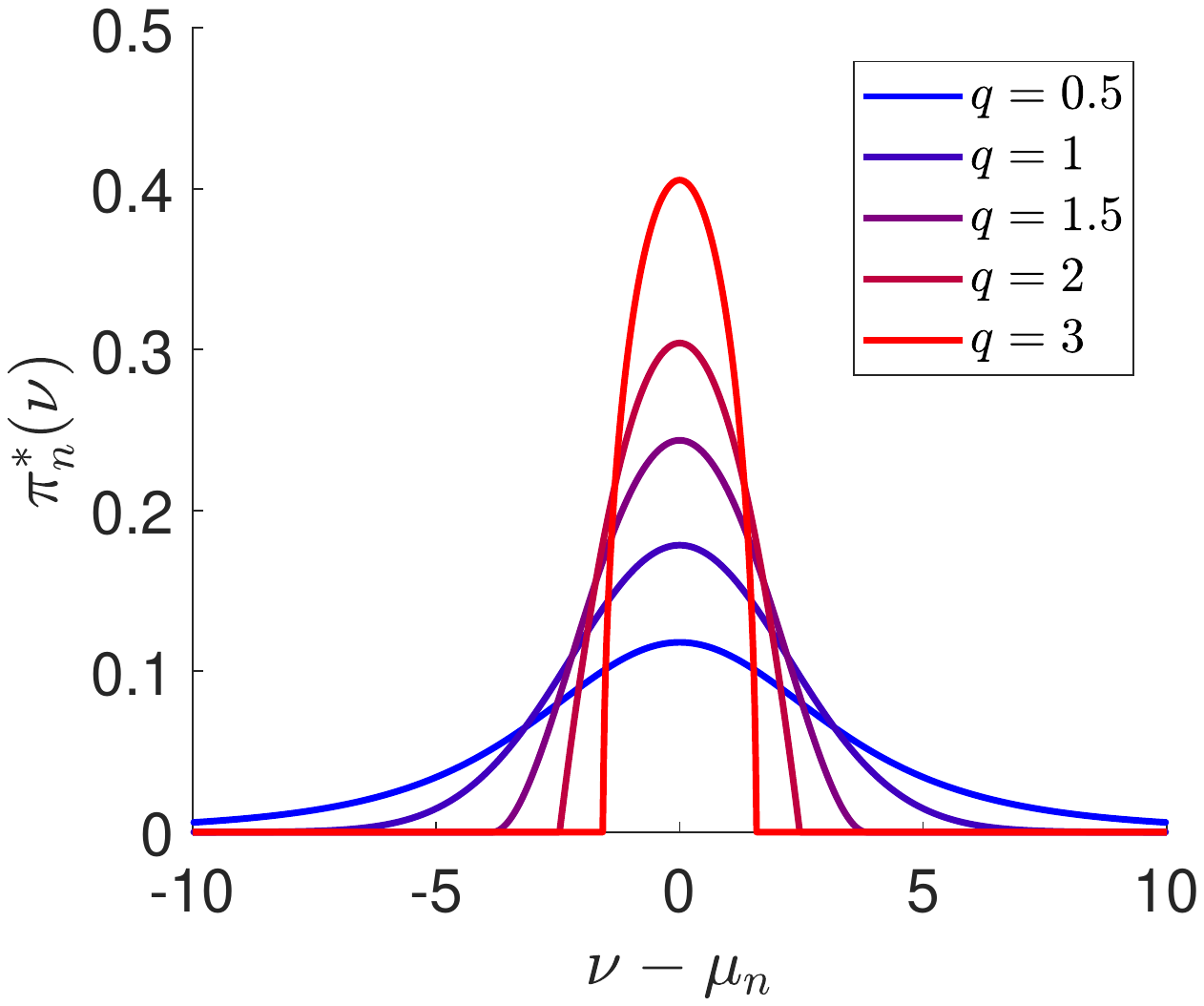}}\hspace{8mm}
		{\includegraphics[trim=140 240 140 240, scale=0.4]{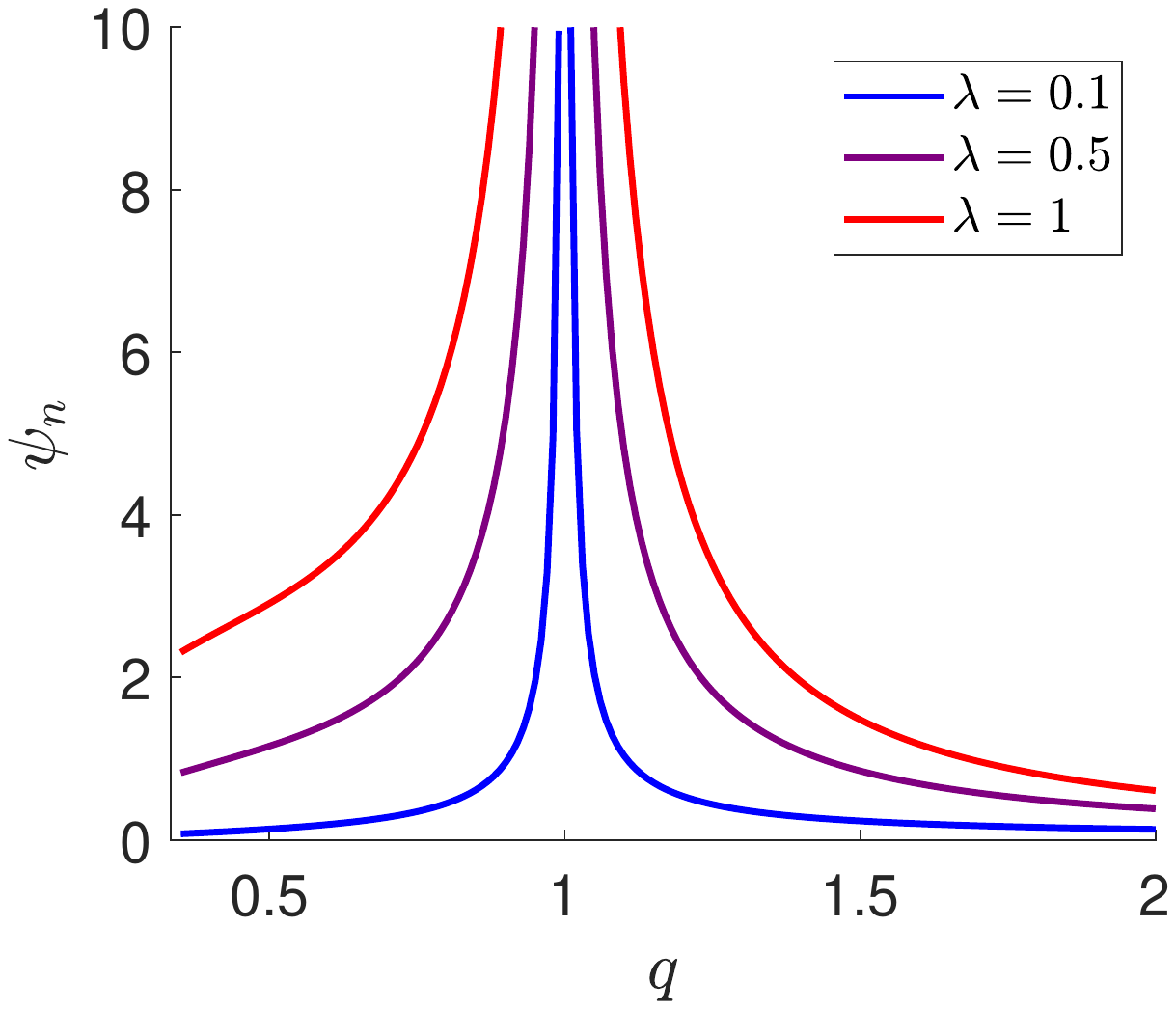}}\hspace{8mm}
		{\includegraphics[trim=140 240 140 240, scale=0.4]{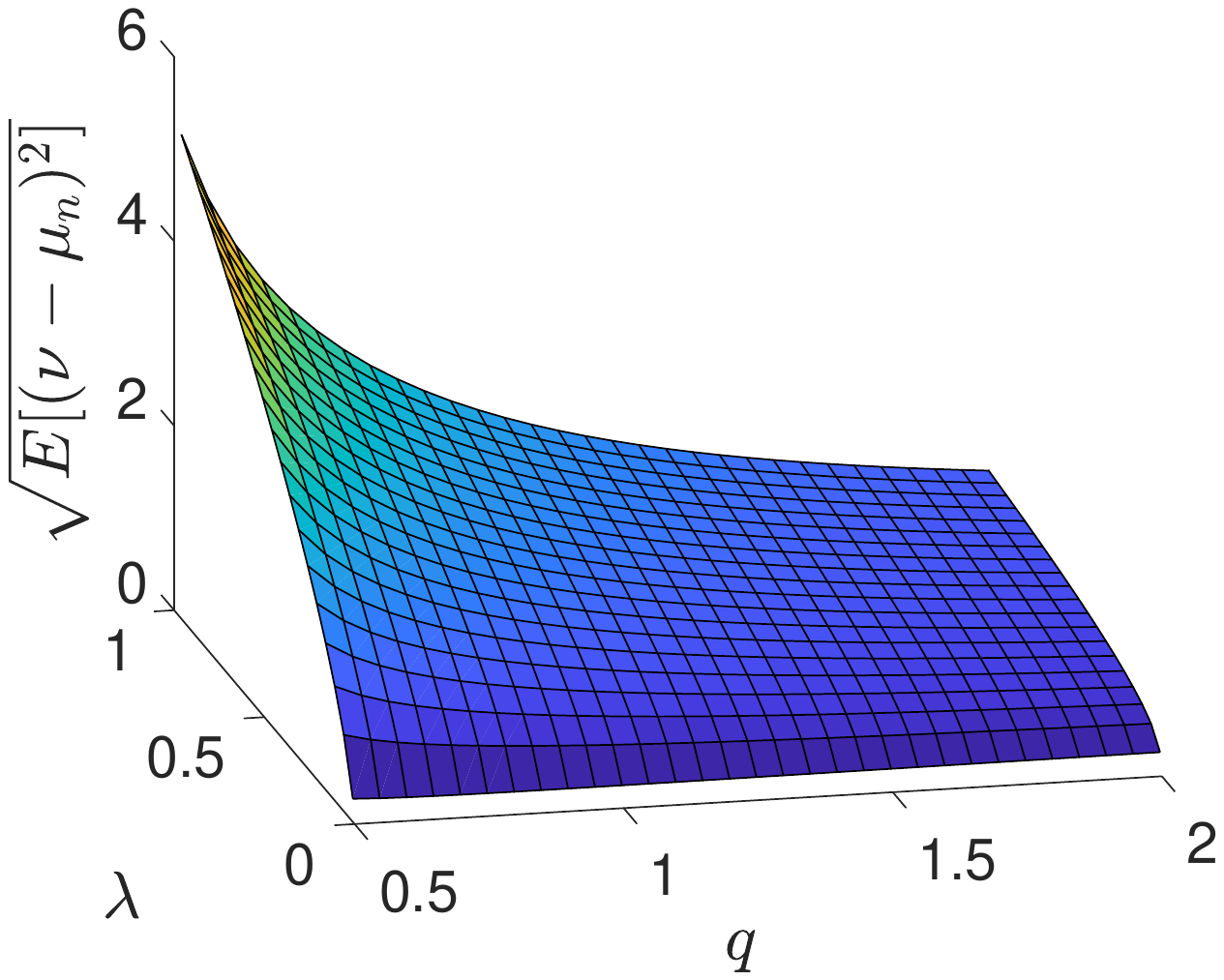}}
	\end{center}
	~\vspace{-1em}
	\caption{The left panel shows the centered optimal density from \eqref{eqn:pi_star} for various values of $q$ with $\lambda = 1$. When $q=1$ the density is Gaussian, otherwise it has bounded support depending on the parameter $\psi_n$. The middle panel shows the parameter $\psi_n$ as a function of $q$ for various values of $\lambda$. The right panel shows the variance corresponding to the density in \eqref{eqn:pi_star}. We set the quantity $K_n - h^{(2)}_{n+1}\,\gamma_n^2\,\Delta t= 0.1$ in all cases.  \label{fig:densities}}
\end{figure}

In Figure \ref{fig:densities} we show examples of the density in \eqref{eqn:pi_star} for various values of $q$. The shrinking support as $q$ increases with $q>1$ is clear, and stems from  $\psi_n$ being decreasing in $q$ in this domain (shown in the middle panel). The right panel  shows the variance of the optimal density as $q$ and $\lambda$ vary. As expected, for all values of $q$ the variance is increasing with respect to $\lambda$. This is due to the higher reward for exploration, however, the dependence is sub-linear and less pronounced when $q$ is large. The variance also decreases with increasing $q$, and is to be expected, because as $q$ increases the distribution becomes less heavy tailed and eventually (for $q>1$) has bounded support with shrinking support as $q$ increases.

\section{Discrete-Time Numerical Experiments}\label{sec:numerical_demonstration}

In this section, we demonstrate, through simulations,\footnote{We use the method given in \cite{thistleton2007generalized} to generate $q$-Gaussian deviates.} how the optimal control distribution behaves and investigate how it depends on the exploration parameter $\lambda$ and the Tsallis entropy parameter $q$. Doing so requires specifying the dynamics of the latent factor process $A$ and unaffected state process $Y$. To keep computations tractable, to obtain the solution to the BS$\Delta$E in \eqref{eqn:BSDeltaE_h1}, we specify the dynamics of $Y$ and $A$ to be
\begin{align}
	Y_n &= Y_0 + \sum_{i=0}^{n-1} A_i\,\Delta t + \sigma\,W^{(1)}_{t_n}\,,\label{eqn:Y_discrete}\\
	A_{n+1} &= e^{-\kappa\,\Delta t}\,A_{n} + \eta\,\Delta W^{(2)}_{t_n}\,,\label{eqn:A_discrete}
\end{align}
where $W^{(1)}$ and $W^{(2)}$ are independent Brownian motions (sampled at the discrete times $\{t_0,\dots,t_N\}$) and $\Delta W^{(2)}_{t_n}:=W^{(2)}_{t_n}-W^{(2)}_{t_{n-1}}$, and independent of $A_0\stackrel{\P}{\sim}\mathcal N(0,\Sigma_0)$. Hence, $A$ may be viewed as a discretely sampled Ornstein-Uhlenbeck (OU) process and $Y$ a discretely observed diffusion with stochastic drift given by $A$. With this specification of $A$ and $Y$, standard filtering results (see, e.g., Theorem 13.4 in \cite{liptser2013statistics2}) show that the filtered process $\widehat{A}$ is given by
\begin{align}
	\widehat{A}_{n+1} &= e^{-\kappa\,\Delta t}\,\widehat{A}_n + \frac{e^{-\kappa\,\Delta t}\,\Sigma_n}{\sigma^2 + \Sigma_n\,\Delta t}\,(Y_{n+1} - Y_n - \widehat{A}_n\,\Delta t)\,, & \widehat{A}_0 &= \E[A_0]\,,\label{eqn:A_hat}
\end{align}
where the deterministic sequence $\Sigma$ represents the estimation error of $\widehat{A}$ and is given recursively by
\begin{align*}
	\Sigma_{n+1} &= e^{-2\,\kappa\,\Delta t}\,\Sigma_n + \eta^2\,\Delta t - \frac{e^{-2\,\kappa\,\Delta t}\,\Sigma_n^2\,\Delta t}{\sigma^2 + \Sigma_n\,\Delta t}\,, \qquad n=0,1,\dots, N-1.
\end{align*}
Linearity of the dynamics of $\widehat{A}$ allow us to represent $h^{(1)}$, the solution to \eqref{eqn:BSDeltaE_h1}, in terms of the solution to a deterministic difference equation.

\begin{proposition}\label{prop:h1_solution}
	Let $\widehat{A}$ be given by \eqref{eqn:A_hat} and let $h^{(2)}$ be the solution to \eqref{eqn:BSDeltaE_h2}. Then the solution to \eqref{eqn:BSDeltaE_h1} is given by
	\begin{align}
		h^{(1)}_n &= \phi_n\,\widehat{A}_n\,,\label{eqn:h1_ansatz}
	\end{align}
	where $\phi$ is the solution to the backwards difference equation
	\begin{align}
		\phi_n &= a_n\,\Delta t + (1 + b_n\,\Delta t)\,e^{-\kappa\,\Delta t}\,\phi_{n+1}\,, & \phi_N &= 0\,,\label{eqn:phi}
	\end{align}
	with
	\begin{align*}
		a_n &= 2\,h^{(2)}_{n+1} \, (1+b_n\,\Delta t)\,,\\
		b_n &= \frac{\gamma_n\,(2\,h^{(2)}_{n+1}\,\gamma_n + D_n)}{2\,(K_n - h^{(2)}_{n+1}\,\gamma_n^2\,\Delta t)}\,.
	\end{align*}
\end{proposition}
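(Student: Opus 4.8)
The plan is a backward induction on $n$, plugging the ansatz $h^{(1)}_n = \phi_n\,\widehat A_n$ into the BS$\Delta$E \eqref{eqn:BSDeltaE_h1} and checking it is consistent provided $\phi$ solves \eqref{eqn:phi}. The one structural fact that makes everything go through is that the filtered drift $\widehat A$ has a linear one-step conditional mean. From \eqref{eqn:A_hat}, $\widehat A_{n+1} = e^{-\kappa\,\Delta t}\,\widehat A_n + \frac{e^{-\kappa\,\Delta t}\,\Sigma_n}{\sigma^2+\Sigma_n\,\Delta t}\,(Y_{n+1}-Y_n-\widehat A_n\,\Delta t)$, and by \eqref{eqn:Y_filter} the innovation $Y_{n+1}-Y_n-\widehat A_n\,\Delta t = \widehat M_{n+1}-\widehat M_n$ is a martingale difference for $\{\mcF_n\}$, hence has zero $\mcF_n$-conditional mean. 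Since $\Sigma_n$ is deterministic and $\widehat A_n$ is $\mcF_n$-measurable, this gives $\E[\widehat A_{n+1}\mid\mcF_n] = e^{-\kappa\,\Delta t}\,\widehat A_n$.

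The base case is immediate: $h^{(1)}_N = 0$ and $\phi_N = 0$, so $h^{(1)}_N = \phi_N\,\widehat A_N$. For the inductive step, assume $h^{(1)}_{n+1} = \phi_{n+1}\,\widehat A_{n+1}$. Because $\phi_{n+1}$ is deterministic, the previous display yields $\E[h^{(1)}_{n+1}\mid\mcF_n] = \phi_{n+1}\,e^{-\kappa\,\Delta t}\,\widehat A_n$. Substituting this into \eqref{eqn:BSDeltaE_h1} and noting that $h^{(2)}$ is deterministic, every term on the right-hand side is a deterministic multiple of $\widehat A_n$, so $\widehat A_n$ factors out cleanly. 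Collecting the remaining scalar and recognizing $\dfrac{\gamma_n\,(2\,h^{(2)}_{n+1}\,\gamma_n + D_n)}{2\,(K_n - h^{(2)}_{n+1}\,\gamma_n^2\,\Delta t)} = b_n$, the coefficient simplifies to $\bigl(\phi_{n+1}\,e^{-\kappa\,\Delta t} + 2\,h^{(2)}_{n+1}\,\Delta t\bigr)\,(1 + b_n\,\Delta t)$; using $a_n = 2\,h^{(2)}_{n+1}\,(1+b_n\,\Delta t)$ this is exactly $a_n\,\Delta t + (1+b_n\,\Delta t)\,e^{-\kappa\,\Delta t}\,\phi_{n+1} = \phi_n$. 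Hence $h^{(1)}_n = \phi_n\,\widehat A_n$, closing the induction. To finish, one observes that the process so constructed is $\mcF_n$-adapted (as $\widehat A_n$ is) and square-integrable (since $A_0$ is Gaussian and the filtering recursion preserves finite second moments), so it is indeed the solution of \eqref{eqn:BSDeltaE_h1}.

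The argument is essentially routine, so there is no deep obstacle; the only points requiring care are (i) justifying the conditional-mean identity for $\widehat A_{n+1}$, which rests on the martingale property of $\widehat M$ for the observation filtration $\{\mcF_n\}$ rather than the full filtration $\{\mathcal G_n\}$, and (ii) keeping track of which quantities are deterministic ($h^{(2)}$, $\Sigma$, $a_n$, $b_n$) versus $\mcF_n$-measurable ($\widehat A_n$) so that factoring $\widehat A_n$ out of the right-hand side of \eqref{eqn:BSDeltaE_h1} is legitimate. Once those are in place, the verification that the scalar coefficient collapses to $\phi_n$ is a short algebraic computation.
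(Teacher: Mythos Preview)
Your proposal is correct and follows essentially the same approach as the paper: direct substitution of the ansatz $h^{(1)}_n=\phi_n\widehat A_n$ into \eqref{eqn:BSDeltaE_h1} together with the identity $\E[\widehat A_{n+1}\mid\mcF_n]=e^{-\kappa\,\Delta t}\widehat A_n$ from \eqref{eqn:A_hat}. The paper's proof is a one-line remark to this effect, whereas you have spelled out the backward induction, the martingale-increment justification of the conditional-mean identity, and the algebraic collapse to $(\phi_{n+1}e^{-\kappa\Delta t}+2h^{(2)}_{n+1}\Delta t)(1+b_n\Delta t)=\phi_n$; all of this is accurate and there is no gap.
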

\begin{proof}
	The recursion \eqref{eqn:phi} follows from direct substitution of \eqref{eqn:h1_ansatz} into \eqref{eqn:BSDeltaE_h1} and using $\E[\widehat{A}_{n+1}|\mcF_n] = e^{-\kappa\,\Delta t}\,\widehat{A}_n$ from \eqref{eqn:A_hat}. \qed
\end{proof}

For the dynamics given in \eqref{eqn:Y_discrete} and \eqref{eqn:A_discrete}, computing the optimal exploratory control reduces to solving the backward difference equations \eqref{eqn:BSDeltaE_h2} and \eqref{eqn:phi} which is carried out numerically. Given the solutions $h^{(1)}$ and $h^{(2)}$ to \eqref{eqn:BSDeltaE}, the optimal exploratory control in Proposition \ref{prop:discrete_optimal_exploration} can be implemented and demonstrated. For this purpose we set the sequences $C=\{C_n\}_{n=0}^{N-1}$, $D=\{D_n\}_{n=0}^{N-1}$, $K=\{K_n\}_{n=0}^{N-1}$, and $\gamma=\{\gamma_n\}_{n=0}^{N-1}$ to be constant.

\begin{figure}
	\begin{center}
		\includegraphics[trim=140 240 140 240, scale=0.55]{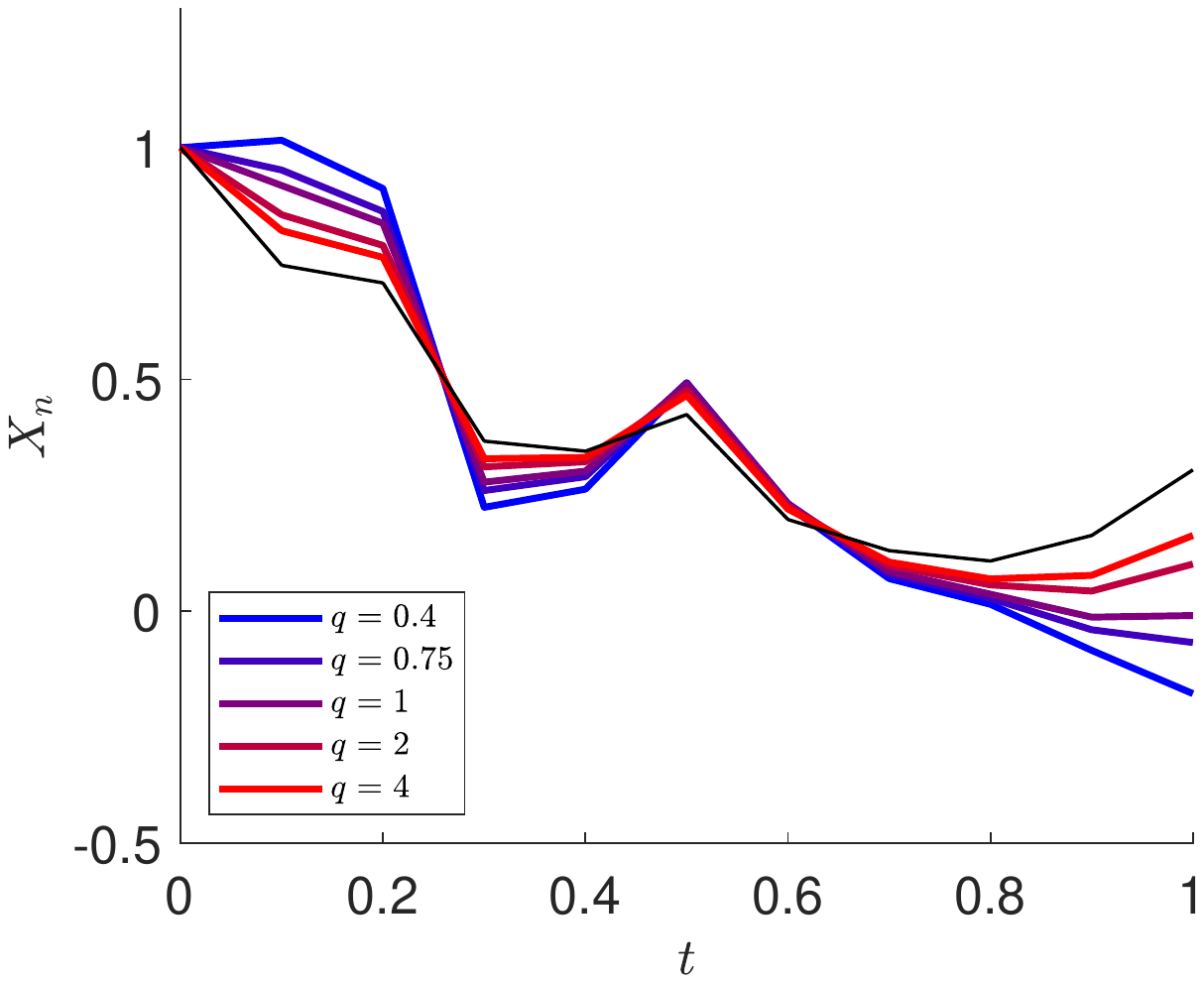}\hspace{10mm}
		\includegraphics[trim=140 240 140 240, scale=0.55]{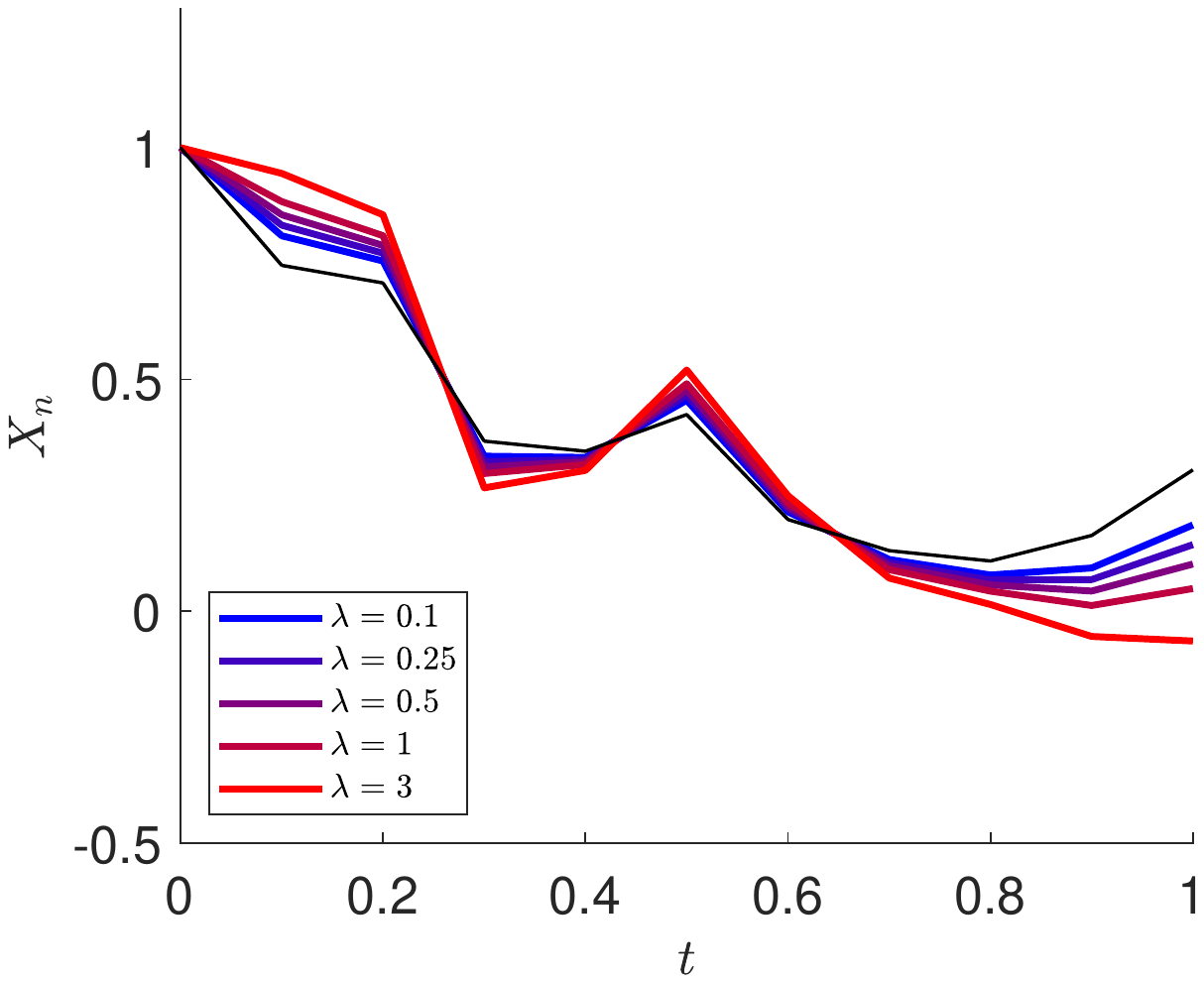}
	\end{center}
	\vspace{-1em}
	\caption{Sample paths for various values of Tsallis entropy parameter $q$ (left) and exploration reward parameter $\lambda$ (right). Other parameter values are $B = 1$, $C=1$, $D=1$, $K=0.1$, $\gamma = 1$, $N=10$, $\sigma = 0.2$, $\kappa = 1$, $\eta = 2$, $\widehat{A}_0 = 0$, $\Sigma_0 = 1$. In the left panel we use $\lambda = 0.5$ and in the right panel we use $q = 2$. The black curve represents the state path using the classical optimal control in \eqref{eqn:nu_star}. \label{fig:MC_discrete_path}}
\end{figure}
In Figure \ref{fig:MC_discrete_path} we show sample paths when implementing the optimal exploratory control. In the left panel we observe that larger values of the Tsallis entropy parameter $q$ typically result in the state path being closer to the one corresponding to the classical optimal control (black curve). This is expected, because as $q$ increases, the $q$-Gaussian distribution becomes more concentrated, as in the discussions surrounding Figure \ref{fig:densities}. The figure also shows that deviation of the paths  from the classical optimal path increases as $\lambda$ increases -- this also expected behaviour due to the increased reward for exploration and deviating from the classical optimizer.

\section{Continuous-Time Model}\label{sec:continuous_model}

Similar to Section \ref{sec:discrete_model}, but now in continuous time, we present the dynamics and performance criterion for a standard control problem so that when we solve the problem with exploration we may directly compare the solutions. 

\subsection{Standard Control Model}\label{sec:continuous_standard_model}

Let $T>0$ be fixed, let $(\Omega, \mathcal{G}, \{\mathcal{G}_t\}_{t\in[0,T]}, \P)$ be a filtered probability space, and let $A=\{A_t\}_{t\in[0,T]}$ and $W^{(1)}=\{W^{(1)}_t\}_{t\in[0,T]}$ be adapted to the filtration $\{\mathcal{G}_t\}_{t\in[0,T]}$, where $W^{(1)}$ is a Brownian motion independent of $A$ and with $\E[\int_0^T A_t^2\,dt]<\infty$.

The unaffected state process is denoted by $Y=\{Y_t\}_{t\in[0,T]}$ and has dynamics given by
\begin{align}
	dY_t &= A_t\,dt + \sigma\,dW^{(1)}_t\,,\label{eqn:dY}
\end{align}
with $Y_0$ and $\sigma > 0$ constant. The controlled state process is denoted by $X^\nu=\{X^\nu_t\}_{t\in[0,T]}$ and satisfies
\begin{align}
	X_t^\nu &= Y_t + \int_0^t \gamma_u\,\nu_u\,du\,,\label{eqn:cont_std_dX}
\end{align}
where $\gamma=\{\gamma_t\}_{t\in[0,T]}$ is deterministic and continuous and $\nu=\{\nu_t\}_{t\in[0,T]}$ is the control process. The performance criterion in the standard control model is
\begin{align}
	J^\nu(X_0) &= \E\biggl[\;g(X_T^\nu) + \int_0^T 
	\left( 
	D_t\,X_t^\nu\,\nu_t - C_t\,(X_t^\nu)^2 - K_t\,\nu_t^2
	\right)
	dt\;\biggr]\,,
\end{align}
where $C=\{C_t\}_{t\in[0,T]}$, $D=\{D_t\}_{t\in[0,T]}$, and $K=\{K_t\}_{t\in[0,T]}$ are deterministic and continuous such that $K_t>0$ for all $t\in[0,T]$. As in discrete-time, admissible controls are processes that are adapted to the filtration $\{\mcF_t\}_{t\in[0,T]}$ generated by $Y$. 

It proves useful to consider the process $\widehat{A}$ defined by $\widehat{A}_t = \E[\,A_t\,|\,\mcF_t\,]$. It is well known that the innovation process $\sigma\,\widehat{W}_t^{(1)}:=Y_t - \int_0^t\widehat{A}_u\,du$ is a (scaled) $\{\mcF_t\}_{t\in[0,T]}$-adapted Brownian motion  (see for example \cite{fujisaki1972stochastic}), which allows us to write 
\begin{align}
	dY_t &= \widehat{A}_t\,dt + \sigma \, d\widehat{W}^{(1)}_t\,.\label{eqn:cont_std_dY}
\end{align}
Additionally, the process $\widehat{A}$ satisfies an SDE of the form (Theorem 4.1 in \cite{fujisaki1972stochastic})
\begin{align}\label{eqn:dA}
    d\widehat{A}_t &= u_t\,dt + v_t\,d\widehat{W}^{(1)}_t\,,
\end{align}
where $u$ and $v$ are $\mcF$-adapted processes.

In the sequel, we assume (with a slight abuse of notation) that $u$ and $v$ are Markov\footnote{Indeed, there are many cases when this holds, and one is the specific example we consider in Section \ref{sec:discrete_continuous_relation}.}, i.e., $u_t = u(t,\widehat{A}_t)$ and $v_t = v(t,\widehat{A}_t)$ for some functions $u,v:\R_+\times\R\to\R$. This additional assumption allows us to more easily take a Hamilton-Jacobi-Bellman (HJB) approach to classifying the optimal control. With this in mind, the corresponding dynamic value function $H^{(0)}$ satisfies
\begin{align}
	H^{(0)}(t,X_t,\widehat{A}_t) &= \sup_{\nu\in\mathcal{A}}\E\biggl[\;g(X_T^\nu) + \int_t^T
	\left(
	D_u\,X_u^\nu\,\nu_u - C_u\,(X_u^\nu)^2 - K_u\,\nu_u^2
	\right)
	du\,\biggl|\,\mcF_t \;\biggr]\,,\label{eqn:cont_std_dynamic}
\end{align}
where the set of admissible strategies is given by
\begin{align}
	\mathcal{A} &= \biggl\{\nu:\nu_t\in\mcF_t \mbox{ and } \E\biggl[\int_0^T\nu_t^2\,dt\biggr]<\infty\biggr\}\,.
\end{align}
The dynamic value function specified by \eqref{eqn:cont_std_dynamic} along with dynamics given by \eqref{eqn:cont_std_dX}, \eqref{eqn:cont_std_dY}, and \eqref{eqn:dA} has associated HJB equation
\begin{equation}\label{eqn:HJB_standard}
	\begin{split}
		\partial_t H^{(0)} + \mathcal{L}^{\widehat{A}}H^{(0)} + \widehat{A}\,\partial_xH^{(0)} + \tfrac{1}{2}\,\sigma^2\,\partial_{xx}H^{(0)} + \sigma\,v\,\partial_{x\widehat{A}}H^{(0)} \hspace{15mm}\\
        -\, C_t\,x^2 + \sup_{\nu}\biggl\{\gamma_t\,\nu\,\partial_xH + D_t\,x\,\nu - K_t\,\nu^2\biggr\} &= 0\,,\\
		H^{(0)}(T,x,\widehat{A}) &= g(x)\,.
	\end{split}
\end{equation}
where $\mathcal{L}^{\widehat{A}} = u\,\partial_{\widehat{A}} + \frac{1}{2}\,v^2\,\partial_{\widehat{A}\widehat{A}}$. We assume that $H^{(0)}$ is a classical solution to \eqref{eqn:HJB_standard}, and thus the optimal control can be written in feedback form as
\begin{align}
	\nu^*(t,X^{\nu^*}_t,\widehat{A}_t) &= \frac{\gamma_t\,\partial_xH^{(0)}(t,X_t,\widehat{A}_t) + D_t\,X^{\nu^*}_t}{2\,K_t}\,,\label{eqn:continuous_standard_control}
\end{align}
which when substituted back into \eqref{eqn:HJB_standard} yields
\begin{equation}\label{eqn:HJB_standard2}
	\begin{split}
		\partial_t H^{(0)} + \mathcal{L}^{\widehat{A}}H^{(0)} + \widehat{A}\,\partial_xH^{(0)} + \tfrac{1}{2}\,\sigma^2\,\partial_{xx}H^{(0)} + \sigma\,v\,\partial_{x\widehat{A}}H^{(0)} - C_t\,x^2 + \frac{(\gamma_t\,\partial_xH^{(0)} + D_t\,x)^2}{4\,K_t} &= 0\,.
	\end{split}
\end{equation}

\subsection{Control Model with Exploration}\label{sec:continuous_model_explore}

In continuous-time when the control is implemented continuously, there is not a proper notion of a delayed observation of the control which necessitated the introduction of a sequence of independent uniform variables and a modification of the agent's filtration in Section \ref{sec:discrete_model_explore}. Instead, we consider a framework similar to other works with exploratory controls, such as \cite{wang2020reinforcement}, \cite{guo2022entropy}, and \cite{firoozi2022exploratory}. Thus, we replace the controlled process $X^\nu$ in the previous section with its exploratory control counterpart
\begin{align}
	X_t^\pi &= Y_t + \int_0^t \int_{-\infty}^\infty \gamma_u\,\nu\,\pi_u(\nu)\,d\nu\,du\,,\label{eqn:continuous_X}
\end{align}
where $\pi = \{\pi_t\}_{t\in[0,T]}$ is a random field adapted to $\{\mcF_t\}_{t\in[0,T]}$ and acts as the control. To provide an incentive for exploration, the performance criterion is modified to include an entropic reward:
\begin{align*}
	J^\pi(X_0) &= \E\biggl[g(X_T^\pi) + \int_0^T \int_{-\infty}^\infty \left(D_t\,X_t^\pi\,\nu - C_t\,(X_t^\pi)^2 - K_t\,\nu^2\right)
	\,\pi_t(\nu)\,d\nu\,dt + \lambda\int_0^T S_q[\pi_t]\,dt\biggr]\,,
\end{align*}
where $S_q[\pi]$ is given by \eqref{eqn:entropy}. The corresponding dynamic value function, denoted by $H^{(\lambda)}$, satisfies
{\small
\begin{align*}
\begin{split}
	H^{(\lambda)}(t,X_t,\widehat{A}_t) 
	= 
	\sup_{\pi\in\mathcal{A}^r}\E\biggl[
	\;&g(X_T^\pi) + \int_t^T \int_{-\infty}^\infty (D_u\,X_u^\pi\,\nu - C_u\,(X_u^\pi)^2 - K_u\,\nu^2)\,\pi_u(\nu)\,d\nu\,du \\
	&
	+ \lambda\int_t^T S_q[\pi_u]\,du\,\biggl|\,\mcF_t
	\;\biggr]\,,    
\end{split}
\end{align*}
}
where the set of admissible exploratory controls is
\begin{align*}
	\mathcal{A}^r &= \biggl\{\pi:\pi_t\in\mcF_t\,, \, \pi_t\geq 0\,, \, \int_{-\infty}^\infty \pi_t(\nu)\,d\nu = 1\,, \mbox{ and } \E\biggl[\int_0^T\int_{-\infty}^\infty \nu^2\,\pi_t(\nu)\,d\nu\,dt\biggr]<\infty\biggr\}\,.
\end{align*}

The following proposition gives the optimal exploratory control analogous to the discrete time setting of Proposition \ref{prop:discrete_optimal_exploration}, and shows that the exploratory control solution may be written in terms of the standard optimal control given in \eqref{eqn:continuous_standard_control}.
Additionally, we draw a relationship between the exploratory value function $H^{(\lambda)}$ and the standard value function $H^{(0)}$.

\begin{proposition}[Continuous-time optimal exploratory control]\label{prop:continuous_optimal_exploration}
	The dynamic value function $H^{(\lambda)}$ is equal to
	\begin{align}
		H^{(\lambda)}(t,x,\widehat{A}) &= H^{(0)}(t,x,\widehat{A}) - \int_t^T \alpha_q(u)\,du\,,
	\end{align}
	where $H^{(0)}$ is the dynamic value function of the standard control problem and where
	\begin{align}
		\alpha_q(t) &= \left\{\begin{array}{cc}
			-\frac{\lambda}{q-1} + \frac{\psi_t\,(q+1)}{3\,q-1}\,, & q > 1
			\\[1em]
			\lambda\,\log(\sqrt{\frac{\pi\,\lambda}{K_t}})\,, & q = 1
			\\[1em]
			-\frac{\lambda}{q-1} - \frac{\psi_t\,(q+1)}{3\,q-1}\,, & 1/3 < q < 1,
			\\[1em]
		\end{array}
		\right.
	\end{align}
	and
	\begin{align}
		\psi_t &= \left\{\begin{array}{cc}
			\Biggl[\frac{\Gamma(\frac{1}{q-1}+\frac{3}{2})}{\sqrt{\pi}\,\Gamma(\frac{1}{q-1}+1)}\biggl(\frac{\lambda\,q}{q-1}\biggr)^{\frac{1}{q-1}}\sqrt{K_t}\Biggr]^{\frac{1}{\frac{1}{q-1}+\frac{1}{2}}}\,, & q > 1
			\\[1em]
			\Biggl[\frac{\Gamma(\frac{1}{1-q})}{\sqrt{\pi}\,\Gamma(\frac{1}{1-q}-\frac{1}{2})}\biggl(\frac{\lambda\,q}{1-q}\biggr)^{\frac{1}{q-1}}\sqrt{K_t}\Biggr]^{\frac{1}{\frac{1}{q-1}+\frac{1}{2}}}\,, & 1/3 < q < 1.
		\end{array}\right.\,.
	\end{align}
	Furthermore, the optimal control density is given by
	\begin{align}\label{eqn:pi_star_cont}
		\pi_t^*(\nu) &= \left\{\begin{array}{cc}
			\biggl(\frac{q-1}{\lambda\,q}\biggr)^{\frac{1}{q-1}}\biggl(\psi_t - \frac{\lambda}{2\,\varsigma_t^2} \,(\nu - \mu_t)^2\biggr)_+^{\frac{1}{q-1}}\,, & q > 1
			\\[1em]
			\frac{1}{\sqrt{2\,\pi\,\varsigma_t^2}}\,\exp\biggl\{-\frac{(\nu - \mu_t)^2}{2\,\varsigma_t^2}\biggr\}\,, & q = 1
			\\[1em]
			\biggl(\frac{1-q}{\lambda\,q}\biggr)^{\frac{1}{q-1}}\biggl(\psi_t + \frac{\lambda}{2\,\varsigma_t^2} \,(\nu - \mu_t)^2\biggr)^{\frac{1}{q-1}}\,, & 1/3 < q < 1,
			\\
		\end{array}\right.
	\end{align}
	where
	\begin{align}
		\mu_t &= \frac{\gamma_t\,\partial_xH^{(0)}(t,X_t^{\pi^*},\widehat{A}_t) + D_t\,X_t^{\pi^*}}{2\,K_t}\,,\label{eqn:continuous_mu}\\
		\varsigma_t^2 &= \frac{\lambda}{2\,K_t}\,.
	\end{align}
\end{proposition}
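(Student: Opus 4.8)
The approach is ``guess and verify'' through the Hamilton--Jacobi--Bellman (HJB) equation. The exploratory value function $H^{(\lambda)}$ should satisfy the same HJB equation as $H^{(0)}$ in \eqref{eqn:HJB_standard}, except that the pointwise maximisation over $\nu$ is replaced by a maximisation over action densities together with the entropic reward; writing $p:=\partial_xH^{(\lambda)}$, its Hamiltonian is
\begin{align*}
	\mathcal{H}(t,x,p)\;=\;\sup_{\pi\in\mcP(\R)}\left\{\int_{-\infty}^\infty\left(\gamma_t\,\nu\,p+D_t\,x\,\nu-K_t\,\nu^2\right)\pi(\nu)\,d\nu+\lambda\,S_q[\pi]\right\}.
\end{align*}
So the plan is: first solve this static variational problem, then substitute the maximiser back into the HJB equation and compare $H^{(\lambda)}$ with $H^{(0)}$.

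For the static problem I would proceed as in the proof of Proposition~\ref{prop:discrete_optimal_exploration}: complete the square, $\gamma_t\nu p+D_tx\nu-K_t\nu^2=-K_t(\nu-\mu_t)^2+K_t\mu_t^2$ with $\mu_t=(\gamma_tp+D_tx)/(2K_t)$, and maximise over $\pi$ subject to $\int\pi=1$, $\pi\ge0$ using a Lagrange multiplier for the normalisation. For $q\ne1$ the first-order condition makes $\pi^{q-1}(\nu)$ an affine function of $(\nu-\mu_t)^2$, i.e.\ it forces the $q$-Gaussian form in \eqref{eqn:pi_star_cont} with scale $\varsigma_t^2=\lambda/(2K_t)$; for $q=1$ the Gibbs weight $e^{(\gamma_t\nu p+D_tx\nu-K_t\nu^2)/\lambda}$ gives the Gaussian. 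The shape parameter $\psi_t$ is then pinned down by $\int\pi_t^*=1$, which reduces to a Beta integral ($\int_{-1}^1(1-w^2)^{1/(q-1)}dw$ for $q>1$, a Student-type integral over $\R$ for $q<1$) expressible through Gamma functions; solving for $\psi_t$ gives the stated expression. This computation is also where the hypothesis $q>1/3$ is used: it is precisely the range in which $\pi_t^*$ has a finite second moment, so that $\pi_t^*\in\mathcal{A}^r$ and the Beta integral defining $\int\nu^2\pi_t^*\,d\nu$ converges.

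Next I would feed $\pi_t^*$ back into $\mathcal{H}$. Because $\pi_t^*$ is symmetric about $\mu_t$, one has $\int(\gamma_t\nu p+D_tx\nu-K_t\nu^2)\pi_t^*\,d\nu=K_t\mu_t^2-K_t\,\mathrm{Var}(\pi_t^*)$, and both $\mathrm{Var}(\pi_t^*)$ and $S_q[\pi_t^*]$ are again Beta integrals that evaluate to explicit functions of $\psi_t$ and $K_t$ only. Collecting terms yields $\mathcal{H}(t,x,\partial_xH^{(\lambda)})=\frac{(\gamma_t\partial_xH^{(\lambda)}+D_tx)^2}{4K_t}-\alpha_q(t)$ with $\alpha_q$ as stated; the crucial point is that the correction $\alpha_q(t)$ depends on neither $x$ nor $\widehat{A}$. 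The HJB equation for $H^{(\lambda)}$ therefore differs from \eqref{eqn:HJB_standard2} only by this additive term, so the ansatz $H^{(\lambda)}(t,x,\widehat{A})=H^{(0)}(t,x,\widehat{A})+c(t)$ makes $\partial_xH^{(\lambda)}=\partial_xH^{(0)}$ and reduces the equation to the ODE $c'(t)=\alpha_q(t)$ with $c(T)=0$ (the two problems share the terminal datum $g$), so $c(t)=-\int_t^T\alpha_q(u)\,du$. Since $\partial_xH^{(\lambda)}=\partial_xH^{(0)}$, the location $\mu_t$ coincides with the standard feedback control \eqref{eqn:continuous_standard_control}, which is \eqref{eqn:continuous_mu}.

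The main obstacle is not the algebra---which mirrors the discrete-time proof---but the two justifications that make the above a proof rather than a derivation. First, one must know the static first-order condition actually identifies the maximiser: the functional $\pi\mapsto\int(\cdots)\pi\,d\nu+\lambda S_q[\pi]$ is strictly concave on $\mcP(\R)$ for every $q>0$ (since $t\mapsto t^q$ is convex for $q>1$ and concave for $q<1$, making $\tfrac{\lambda}{q-1}(1-\int\pi^q)$ concave in both regimes, while $-\int\pi\log\pi$ is strictly concave), so the candidate is the unique optimiser. Second, one needs a verification argument showing the candidate $H^{(\lambda)}=H^{(0)}+c$ really is the value function: using that $H^{(0)}$ is assumed to be a classical solution of \eqref{eqn:HJB_standard2}, apply It\^o's formula to $H^{(\lambda)}(t,X_t^\pi,\widehat{A}_t)$ along an arbitrary admissible $\pi$ and along $\pi^*$, then use the admissibility moment bounds (and the integrability of $\widehat{A}$ coming from \eqref{eqn:dA} under the standing Markov assumption on $u,v$) together with a localisation/uniform-integrability argument to pass to the limit and obtain $H^{(\lambda)}\ge J^\pi$ with equality at $\pi^*$. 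Since, as the authors remark, the continuous-time randomised-control model is only heuristically specified, this verification step is where the care is required.
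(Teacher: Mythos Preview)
Your proposal is correct and follows essentially the same route as the paper: write the exploratory HJB, solve the static constrained optimisation over $\pi$ via Lagrange/KKT multipliers to obtain the $q$-Gaussian, fix $\psi_t$ by normalisation through a Beta/Gamma integral, substitute back to see that the supremum equals the standard quadratic term plus a state-independent correction $-\alpha_q(t)$, and conclude $H^{(\lambda)}=H^{(0)}-\int_t^T\alpha_q$. The paper carries out the calculation explicitly only for $1/3<q<1$ (you sketch it uniformly), and it stops at identifying the classical solution of the HJB without the verification argument you outline; your added remarks on strict concavity and on the It\^o/localisation verification step go slightly beyond what the paper actually proves.
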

\begin{proof}
	See Appendix \ref{proof:exploratory_cont}.
\end{proof}

\section{Discrete and Continuous Relation}\label{sec:discrete_continuous_relation}

In this section we demonstrate two relationships between the discrete- and continuous-time formulations of exploration. The first is the convergence of the discrete-time solution to the continuous-time solution, and the second is that we may use some terms which appear in the continuous-time solution to construct approximations to the discrete time solution, which may be desirable because the continuous-time counterparts are generally easier to evaluate in closed form. At the heart of these relationships is the fact that the discrete-time system \eqref{eqn:BSDeltaE} converges to the solution of a continuous-time analogue, which we now show before proceeding to numerical experiments.

To this end, we specify $g(x) = -B\,x^2$, the same terminal reward as in the discrete-time formulation. This also makes the continuous time solution explicitly tractable. With this terminal payoff the corresponding HJB equation can be further simplified with the ansatz $H^{(\lambda)}(t,x,\widehat{A}) = f_0(t,\widehat{A}) + f_1(t,\widehat{A})\,x + f_2(t)\,x^2$. Substituting this form back into the HJB equation for $H^{(\lambda)}$ (see \eqref{eqn:HJB_pf} from the proof of Proposition \ref{prop:continuous_optimal_exploration}) and grouping terms by powers of $x$ gives the three coupled equations
\begin{subequations}
    \begin{align}
    	\partial_tf_0 + \mathcal{L}^{\widehat{A}}f_0 + \sigma\,v\,\partial_{\widehat{A}}f_1+ \widehat{A}\,f_1 + \sigma^2\,f_2 + \frac{\gamma_t^2\,f_1^2}{4\,K_t} - \alpha_q &= 0\,, & f_0(T,\widehat{A}) &= 0\,,\label{eqn:df0}\\
    	\partial_tf_1 + \mathcal{L}^{\widehat{A}}f_1 + 2\,\widehat{A}\,f_2 + \frac{\gamma_t\,f_1\,(2\,\gamma_t\,f_2 + D_t)}{2\,K_t} &= 0\,, & f_1(T,\widehat{A}) &= 0\,,\label{eqn:df1}\\
    	\partial_tf_2 - C_t + \frac{(2\,\gamma_t\,f_2 + D_t)^2}{4\,K_t} &= 0\,, & f_2(T) &= -B\,.\label{eqn:df2}
    \end{align}
\end{subequations}
Defining the processes $h^{(1)}$ and $h^{(2)}$ by
\begin{align}\label{eqn:h12}
	h^{(1)}_t = f_1(t,\widehat{A}_t)\,,
	\qquad \text{and}  \qquad
	h^{(2)}_t = f_2(t)\,,
\end{align}
we may apply It\^o's formula to the process $h^{(1)}$ using \eqref{eqn:df1} and \eqref{eqn:dA}, and a straightforward derivative to $h^{(2)}$ using \eqref{eqn:df2} to obtain
\begin{equation}\label{eqn:BSDE}
	\begin{split}
		dh^{(1)}_t &= (\partial_tf_1 + \mathcal{L}^{\widehat{A}}f_1)\,dt + v(t,\widehat{A}_t)\,\partial_{\widehat{A}}f_1(t,\widehat{A}_t)\,d\widehat{W}^{(1)}_t\\
        &= -\biggl(2\,\widehat{A}_t\,h_t^{(2)} + \frac{\gamma_t\,h_t^{(1)}\,(2\,\gamma_t\,h_t^{(2)} + D_t)}{2\,K_t}\biggr)\,dt + v(t,\widehat{A}_t)\,\partial_{\widehat{A}}f_1(t,\widehat{A}_t)\,d\widehat{W}^{(1)}_t\,,\\
		dh^{(2)}_t &= -\biggl(\frac{(2\,\gamma_t\,h_t^{(2)} + D_t)^2}{4\,K_t} - C_t\biggr)\,dt\,.
	\end{split}
\end{equation}
with terminal conditions $h^{(1)}_T = 0$ and $h^{(2)}_T = -B$. Inspection of the BS$\Delta$E \eqref{eqn:BSDeltaE} in the formal limit $\Delta t\rightarrow 0$ shows a similarity to the BSDE \eqref{eqn:BSDE}. This form of the value function also reduces \eqref{eqn:continuous_mu} to
\begin{align}
	\mu_t &= \frac{\gamma_t\,h^{(1)}_t + (2\,\gamma_t\,h^{(2)}_t + D_t)\,X_t^{\pi^*}}{2\,K_t}\,.
\end{align}
Note again the formal similarity to \eqref{eqn:mu_n} in the limit $\Delta t\rightarrow 0$. 

Below we establish that for sufficiently small $\Delta t$, the processes $h^{(1)}$ and $h^{(2)}$ in \eqref{eqn:h12} approximate the solution to \eqref{eqn:BSDeltaE}. To this end, denote by $h^{(i,N)}$ the solutions to \eqref{eqn:BSDeltaE} when $\Delta t = T/N$. Additionally, with slight abuse of notation, all discrete-time processes and functions in the model are understood to be evaluated at grid points of the continuous-time analogue (such as $K_n = K_{t_n}$).

\begin{lemma}[Convergence of $h^{(i)}$]\label{lem:convergence}
    Let $\beta_0>0$. Suppose that for each $\beta\in[0,\beta_0)$ there is a solution to
    \begin{align*}
        dh^{(2,\beta)}_t &= -\biggl(\frac{(2\,\gamma_t\,h_t^{(2,\beta)} + D_t)^2}{4\,(K_t - h_t^{(2,\beta)}\,\gamma_t^2\,\beta) } - C_t\biggr)\,dt\,, & h^{(2,\beta)}_T &= -B\,,
    \end{align*}
    such that
    \begin{align*}
         \inf_{\substack{t\in[0,T] \\ \beta\in[0,\beta_0)}}\biggl\{K_t - h_t^{(2,\beta)}\,\gamma_t^2\,\beta\biggr\}>0\,.
    \end{align*}
    Then for every $\epsilon>0$ there exists $N^*$ such that for $N\geq N^*$ 
    \begin{align*}
        \max_{0\leq n\leq N}\mathbb{E}\biggl[(h^{(1)}_{t_n} - h^{(1,N)}_n)^2\biggr] + \max_{0\leq n\leq N}(h^{(2)}_{t_n} - h^{(2,N)}_n)^2 & < \epsilon\,.
    \end{align*}
\end{lemma}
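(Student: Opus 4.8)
The plan is to treat the convergence of $h^{(2,N)}$ as a deterministic ``one‑step scheme converges to its ODE'' statement, and then to bootstrap off it to control the (stochastic, but linear) equation for $h^{(1,N)}$ via a discrete Gronwall inequality in $L^2$.

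\emph{Step 1 ($h^{(2)}$, consistency and stability).} Write \eqref{eqn:BSDeltaE_h2} as $h^{(2,N)}_n = h^{(2,N)}_{n+1} + \Delta t\,G_{\Delta t}(t_n,h^{(2,N)}_{n+1})$ with $G_\beta(t,h):=\frac{(2\gamma_t h + D_t)^2}{4(K_t - h\,\gamma_t^2\,\beta)} - C_t$, and note from \eqref{eqn:BSDE} that $h^{(2)}=h^{(2,0)}$ solves $\dot h = -G_0(t,h)$, so the scheme is a consistent discretisation of this ODE: expanding $h^{(2)}_{t_n} = h^{(2)}_{t_{n+1}} + \int_{t_n}^{t_{n+1}} G_0(s,h^{(2)}_s)\,ds$ and using $G_{\Delta t}(t_n,\cdot) - G_0(t_{n+1},\cdot) = O(\Delta t)$ uniformly on compacts away from the singular set, the local truncation error satisfies $|\tau_n|=O(\Delta t^2)$ (for Lipschitz coefficients; mere continuity still gives $o(\Delta t)$, which suffices). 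For stability, the hypotheses furnish $\delta>0$ with $K_t - h^{(2,\beta)}_t\gamma_t^2\beta\ge\delta$ on $[0,T]\times[0,\beta_0)$, and Gronwall applied to the $\beta$‑ODEs gives $\sup_{t,\beta}|h^{(2,\beta)}_t|\le M$; hence $G_\beta(t,\cdot)$ is Lipschitz with some constant $L$, uniformly in $t\in[0,T]$ and $\beta\in[0,\beta_0)$, on the slab $\mathcal{R}:=\{\,|h|\le M+1\,\}\cap\{\,K_t - h\,\gamma_t^2\,\beta\ge\delta/2\,\}$.

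\emph{Step 2 ($h^{(2)}$, bootstrap).} Fix $N$ large enough that $\Delta t<\beta_0$ and $(M+1)\,\max_t\gamma_t^2\,\Delta t<\delta/2$, and run the error estimate and an inclusion bootstrap together, backward in $n$: assuming $h^{(2,N)}_k\in\mathcal{R}$ for all $k>n$, the errors $e_n:=h^{(2)}_{t_n} - h^{(2,N)}_n$ satisfy $|e_n|\le(1+L\,\Delta t)\,|e_{n+1}| + |\tau_n|$, whence (with $e_N=0$) $|e_n|\le e^{LT}\,N\,C\,\Delta t^2 = e^{LT}\,C\,T\,\Delta t$, which is $<1$ once $N$ is large; because $h^{(2)}_{t_n}$ lies in the interior of $\mathcal{R}$ (distance $\ge1$ from $\{|h|=M+1\}$, denominator $\ge\delta$), this forces $h^{(2,N)}_n\in\mathcal{R}$, closing the induction. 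Hence $\max_n(h^{(2)}_{t_n} - h^{(2,N)}_n)^2 = O(1/N^2)\to0$.

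\emph{Step 3 ($h^{(1)}$).} From \eqref{eqn:BSDE}, $h^{(1)}$ solves the linear BSDE $dh^{(1)}_t = -\big(2\,\widehat A_t\,h^{(2)}_t + \tfrac{\gamma_t\,h^{(1)}_t\,(2\gamma_t h^{(2)}_t + D_t)}{2K_t}\big)dt + Z_t\,d\widehat W^{(1)}_t$, $h^{(1)}_T=0$, while \eqref{eqn:BSDeltaE_h1} is affine in $\E[h^{(1)}_{n+1}\,|\,\mcF_n]$ and in $\widehat A_n$. Taking $\E[\,\cdot\,|\,\mcF_{t_n}]$ in the integrated BSDE gives $h^{(1)}_{t_n} = \E[h^{(1)}_{t_{n+1}}\,|\,\mcF_{t_n}] + \Delta t\,\big(2\,\widehat A_{t_n}\,h^{(2)}_{t_n} + \tfrac{\gamma_{t_n}\,h^{(1)}_{t_n}\,(2\gamma_{t_n}h^{(2)}_{t_n}+D_{t_n})}{2K_{t_n}}\big) + R_n$, where $\|R_n\|_{L^2}=O(\Delta t^2)$ because replacing $\E[\int_{t_n}^{t_{n+1}}\Phi_s\,ds\,|\,\mcF_{t_n}]$ by $\Delta t\,\Phi_{t_n}$ kills the conditionally mean‑zero martingale part of the Itô integrand $\Phi$ and leaves an $O(\Delta t^2)$ drift contribution; this uses $\sup_n\E[\widehat A_{t_n}^2]<\infty$, which follows from \eqref{eqn:dA} under mild growth of $u,v$ (satisfied for the example of Section \ref{sec:discrete_continuous_relation}). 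Subtracting the recursion for $h^{(1,N)}_n$ from this identity, using that their drifts differ only through $h^{(2)}-h^{(2,N)}$ and time‑shifts of order $\Delta t$, and that by Step 2 the discrete denominators $K_{t_n} - h^{(2,N)}_{n+1}\gamma_{t_n}^2\Delta t$ stay $\ge\delta/2$ for $N$ large, the errors $e^{(1)}_n:=h^{(1)}_{t_n} - h^{(1,N)}_n$ obey, after squaring and applying Jensen to $\E[e^{(1)}_{n+1}\,|\,\mcF_{t_n}]$ and Young to the cross terms, $\E[(e^{(1)}_n)^2]\le(1+C\,\Delta t)\,\E[(e^{(1)}_{n+1})^2] + C\,\Delta t\,(\max_k(e_k)^2 + \Delta t^2)$. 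A discrete Gronwall inequality with $e^{(1)}_N=0$ then gives $\max_n\E[(e^{(1)}_n)^2]\le C\,e^{CT}\,(\max_k(e_k)^2 + \Delta t^2)\to0$. Adding the two bounds and picking $N^*$ so their sum is $<\epsilon$ completes the proof.

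I expect the main obstacle to be the simultaneous bootstrap in Step 2 — keeping the discrete iterates $h^{(2,N)}_n$, and hence the denominators $K_{t_n}-h^{(2,N)}_{n+1}\gamma_{t_n}^2\Delta t$ that appear throughout \eqref{eqn:BSDeltaE}, inside a slab on which $G_\beta$ is uniformly Lipschitz and the denominators are bounded away from $0$, uniformly in $N$. This is exactly where the hypothesis on the whole family $\{h^{(2,\beta)}\}_{\beta\in[0,\beta_0)}$ is used: it is the $\beta=\Delta t$ member that controls the denominators actually occurring in the scheme. Once this a priori control is in hand, the consistency estimates, the second‑moment bound on $\widehat A$, and the linear discrete Gronwall argument for $h^{(1)}$ are all routine.
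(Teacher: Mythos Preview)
Your argument is correct but takes a genuinely different route from the paper's. The paper does not carry out a direct consistency--stability analysis: instead it separates the two roles that $\Delta t$ plays in \eqref{eqn:BSDeltaE} (step size versus coefficient parameter) by introducing an auxiliary parameter $\beta$, obtaining a two-parameter family $h^{(i,N,\beta)}$ of BS$\Delta$E solutions together with a continuous-time counterpart $h^{(i,\beta)}$ solving a $\beta$-parametrised BSDE that reduces to \eqref{eqn:BSDE} at $\beta=0$. It then invokes the BSDE discretisation estimate of \cite{gobet2007error} to bound $h^{(i,N,\beta)}-h^{(i,\beta)}$ uniformly in $\beta\in[0,\beta_0)$, and the parameter-continuity result of \cite{el1997backward} to bound $h^{(i,\beta)}-h^{(i,0)}$; a triangle inequality with $\beta=\Delta t$ finishes. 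The hypothesis on the whole family $\{h^{(2,\beta)}\}$ is used there precisely to supply the uniform Lipschitz constants those black-box theorems require. Your route trades these citations for a hands-on bootstrap on the deterministic $h^{(2)}$ recursion followed by a linear discrete $L^2$ Gronwall for $h^{(1)}$; this is self-contained and even yields explicit rates, at the cost of the simultaneous-inclusion bookkeeping you flag in Step~2. The paper's decomposition is shorter and more modular, and would extend verbatim to a nonlinear driver for $h^{(1)}$, whereas your Step~3 exploits linearity and would need to be upgraded to a standard BSDE stability estimate in that case.
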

\begin{proof}
	See Appendix \ref{proof:convergence}.
\end{proof}

The implementation of the continuous-time solution in the following sections requires a specification of a process $A$, and subsequently a computation of the projection $\widehat{A}$. With the goal of relating the discrete-time solution to the continuous-time solution, we use the continuous analogue of \eqref{eqn:A_discrete} for the dynamics of $A$, namely a mean-reverting OU process satisfying
\begin{align}
	dA_t &= -\kappa\,A_t\,dt + \eta\,dW^{(2)}_t\,,
\end{align}
where $W^{(2)}$ is a Brownian motion independent of $W^{(1)}$ appearing in \eqref{eqn:dY}, and with $A_0$ having Gaussian distribution with variance $\Sigma_0$. By standard filtering results (see Theorem 12.7 in \cite{liptser2013statistics2}), the dynamics of the filtered process are given by
\begin{align}
	d\widehat{A}_t &= -\kappa\,\widehat{A}_t\,dt + \frac{\Sigma_t}{\sigma^2}\,(dY_t - \widehat{A}_t\,dt)\,, & \widehat{A}_0 &= \E[A_0]\,,
\end{align}
where $\Sigma_t$ represents the estimation error of $\widehat{A}$ and is given by
\begin{equation}
	\begin{split}
		\Sigma_t &= \frac{(1+\alpha^+\,\Sigma_0)\,e^{\xi\,t} - (1+\alpha^-\,\Sigma_0)\,e^{-\xi\,t}}{-\alpha^-\,(1+\alpha^+\,\Sigma_0)\,e^{\xi\,t} + \alpha^+\,(1+\alpha^-\,\Sigma_0)\,e^{-\xi\,t}}\,,
		\\[1em]
		\alpha^\pm &=	\frac{-\sigma\,\kappa \pm \sqrt{\sigma^2\,\kappa^2 + \eta^2}}{\sigma\,\eta^2}\,,
		\\[1em]
		\xi &= \frac{\sqrt{\sigma^2\,\kappa^2 + \eta^2}}{\sigma} \,.
	\end{split}
\end{equation}

With this specification of $A$ in mind, we may also write the solution of \eqref{eqn:BSDE} in closed form when the processes $C$, $D$, $K$, and $\gamma$ are constant. The solution, which can be checked by direct substitution, is given by
\begin{equation}\label{eqn:h_t}
	\begin{split}
		h^{(1)}_t &= \frac{\widehat{A}_t}{\gamma}\,\biggl( \frac{\psi^-\,(2\sqrt{C\,K}-D)\,(e^{-\kappa\,(T-t)} - e^{-\omega\,(T-t)})}{(\omega-\kappa)\,(\psi^-\,e^{-\omega\,(T-t)} + \psi^+\,e^{\,\omega\,(T-t)})} + \frac{\psi^+\,(2\sqrt{C\,K}+D)\,(e^{-\kappa\,(T-t)} - e^{\omega\,(T-t)})}{(\omega+\kappa)\,(\psi^-\,e^{-\omega\,(T-t)} + \psi^+\,e^{\,\omega\,(T-t)})} \biggr)
		\\[1em]
		h^{(2)}_t &= \frac{\sqrt{C\,K}}{\gamma}\,\frac{\psi^-\,e^{-\omega\,(T-t)} - \psi^+\,e^{\,\omega\,(T-t)}}{\psi^-\,e^{-\omega\,(T-t)} + \psi^+\,e^{\,\omega\,(T-t)}} - \frac{D}{2\,\gamma}
		\\[1em]
		\psi^\pm &= \sqrt{2\,C} \pm \sqrt{\frac{2}{K}}\,\gamma\,B \mp \frac{D}{\sqrt{2\,K}}
		\\[1em]
		\omega &= \gamma\,\sqrt{\frac{C}{K}}
	\end{split}
\end{equation}

\subsection{Discrete Convergence}

Here, we discuss the behaviour of the discrete-time solution as $\Delta t\rightarrow 0$ as it relates to the continuous-time solution. First we inspect the continuous-time solution given in Proposition \ref{prop:continuous_optimal_exploration} and the dynamics of $X^{\pi}$ given in \eqref{eqn:continuous_X}, which reduce to
\begin{align}\label{eqn:equiv}
		X_t^{\pi^*} = Y_t + \int_0^t \int_{-\infty}^\infty \gamma_u\,\nu\,\pi^*_u(\nu)\,d\nu\,du
		&=  Y_t + \int_0^t \gamma_u \, \mu_u\,du
		\nonumber\\
		&=  Y_t + \int_0^t \gamma_u \, \nu_u^*\,du = X_t^{\nu^*}\,,
\end{align}
where $\nu^*$ is the optimal classical control given in \eqref{eqn:continuous_standard_control} and $X^{\nu^*}$ is the resulting controlled process. Thus, when implementing the optimal exploratory control in continuous-time, the resulting state process coincides with that of the classical control problem.
\begin{figure}[t!]
	\begin{center}
	    \centering {\small varying $\lambda$}
	    \\
		\includegraphics[trim=140 240 140 240, scale=0.42]{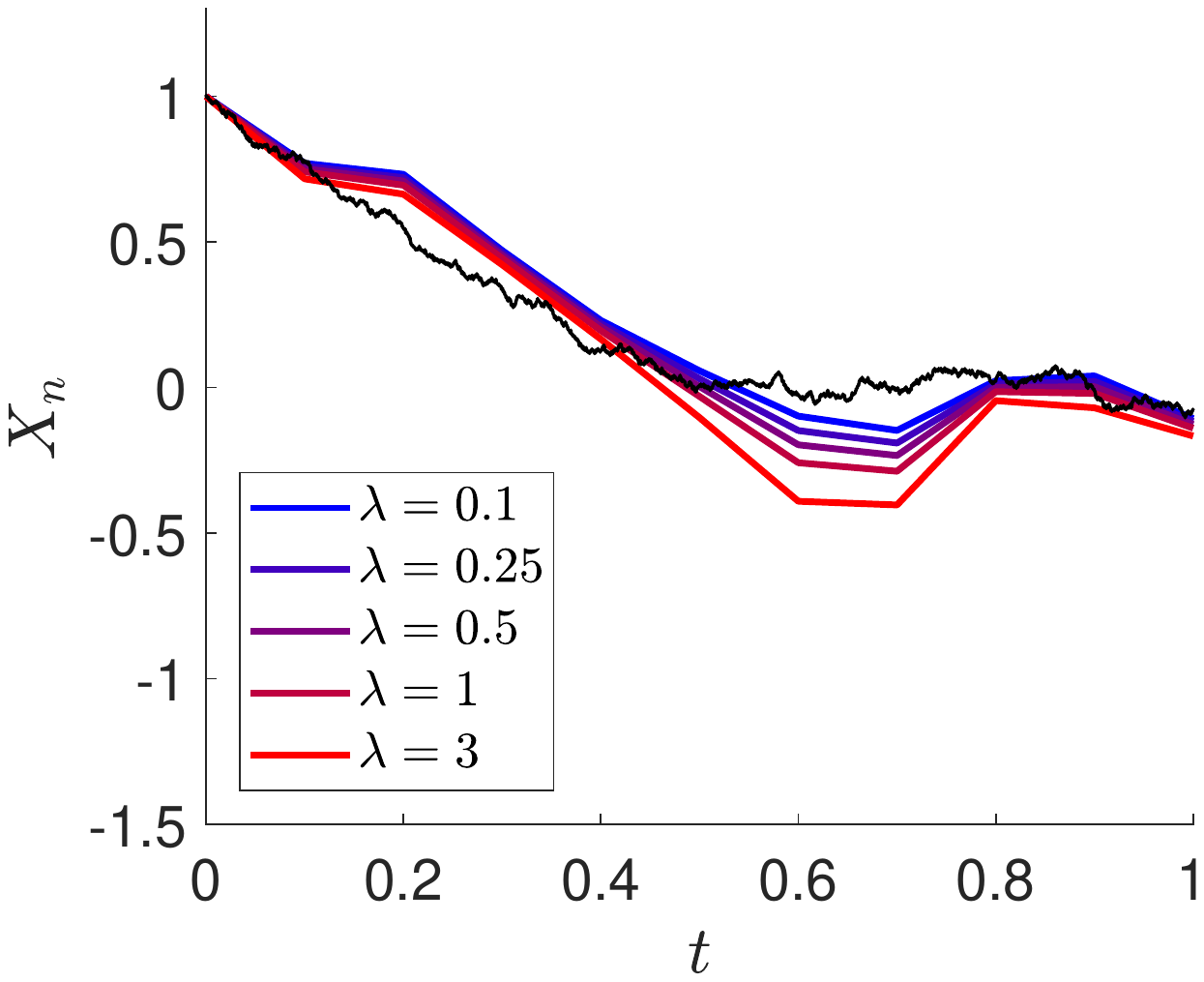}\hspace{8mm}
		\includegraphics[trim=140 240 140 240, scale=0.42]{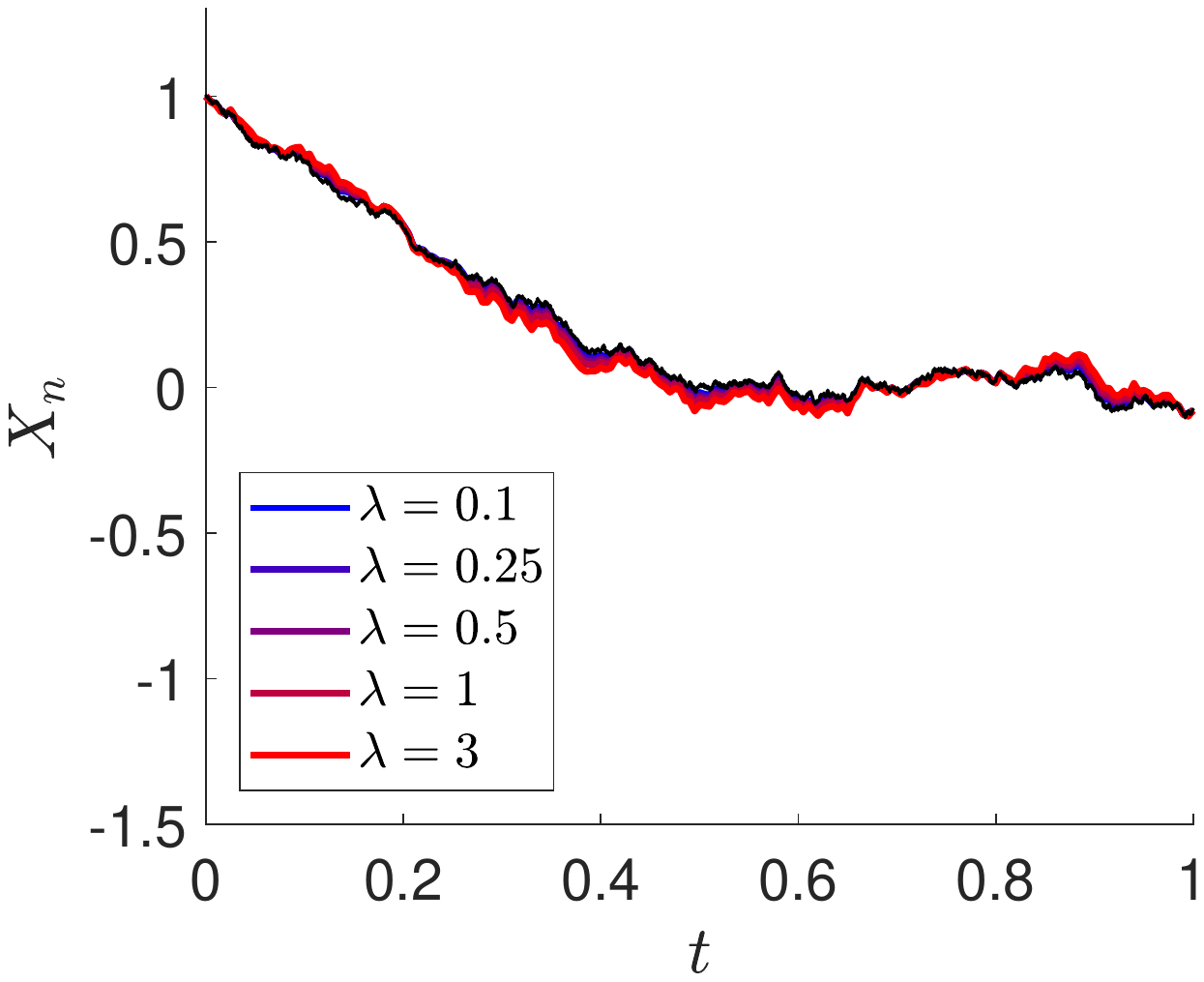}\hspace{8mm}
		\includegraphics[trim=140 240 140 240, scale=0.42]{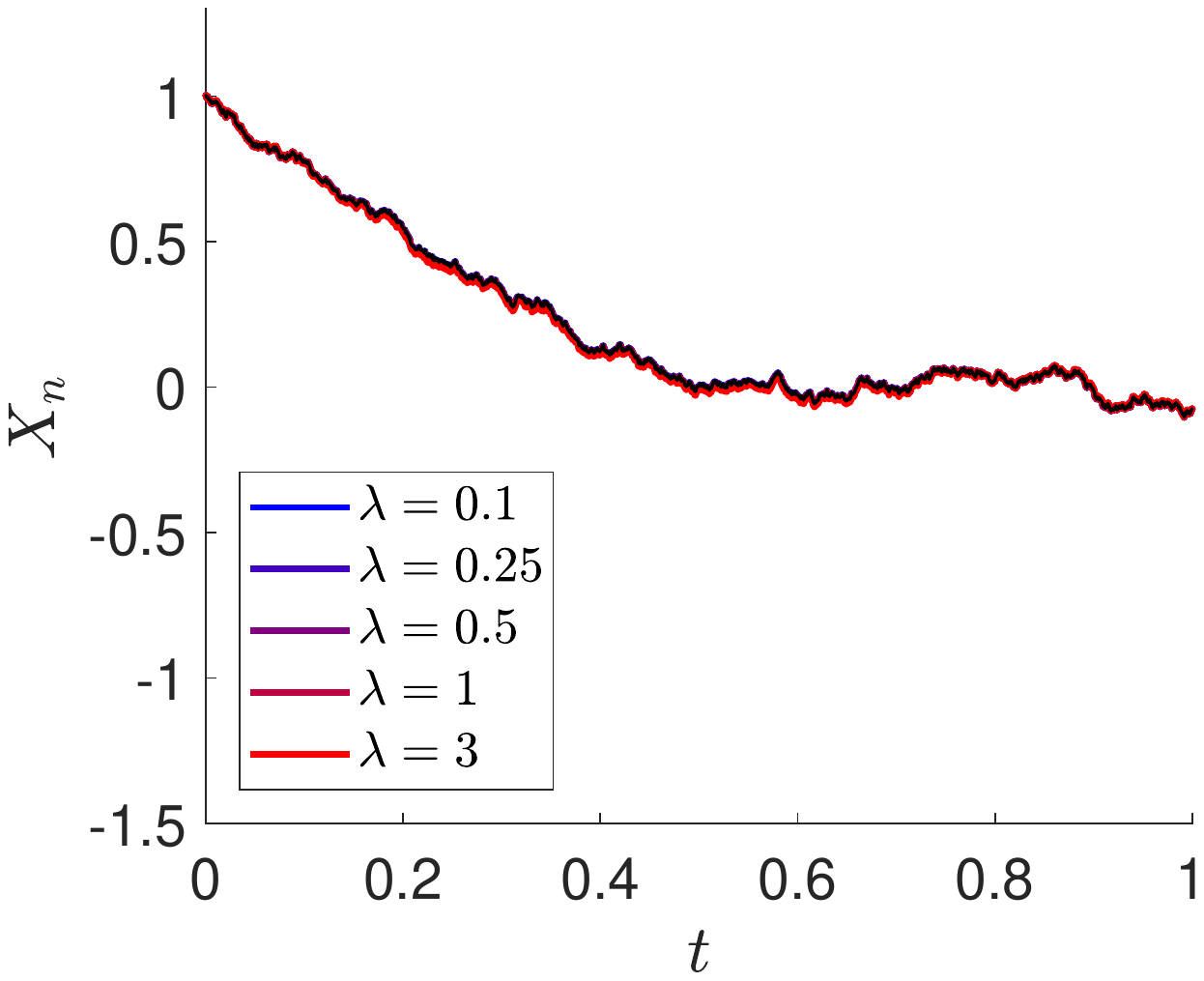}
		\\ 
	    \centering {\small varying $q$}
	    \\		
		\includegraphics[trim=140 240 140 240, scale=0.42]{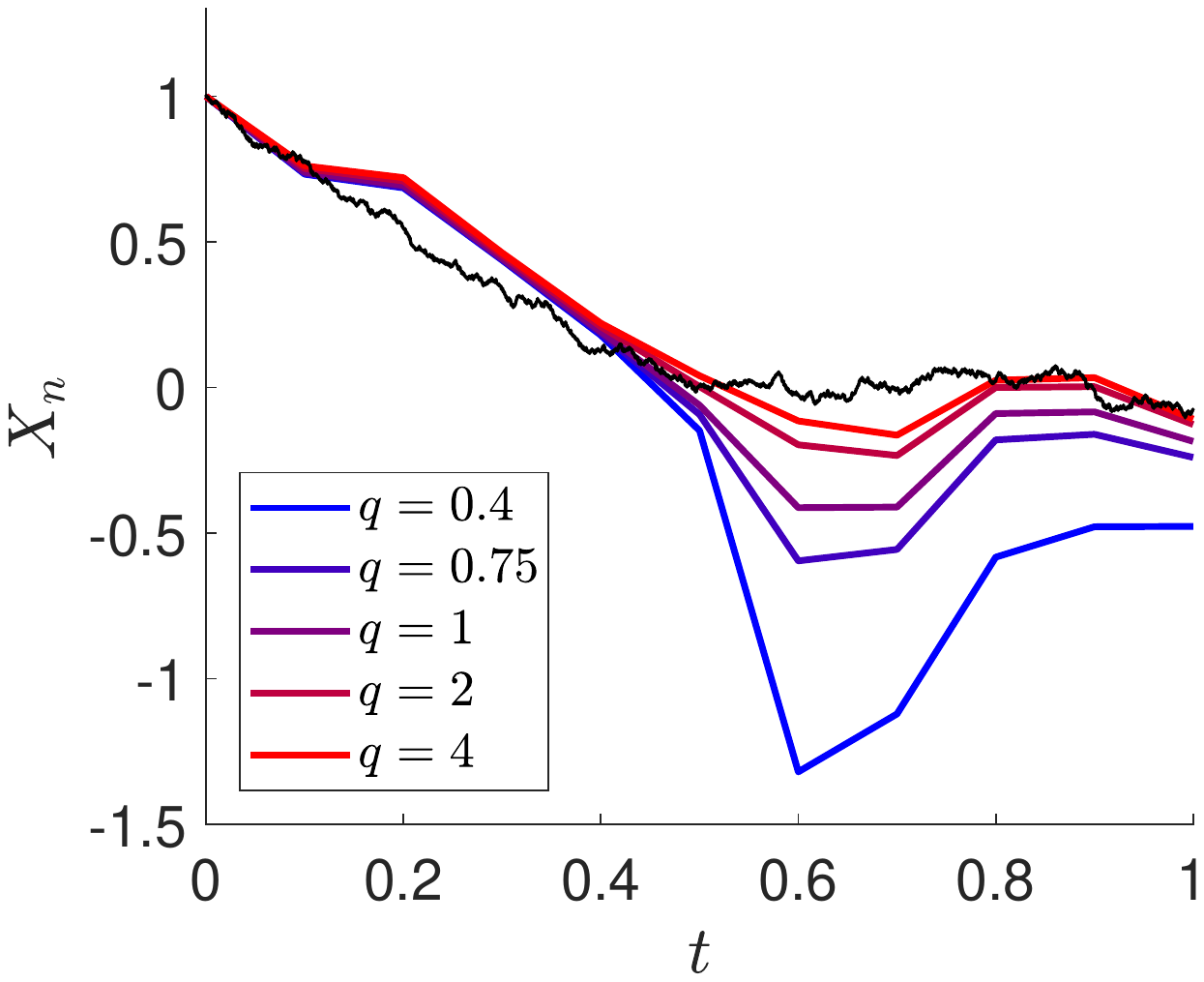}\hspace{8mm}
		\includegraphics[trim=140 240 140 240, scale=0.42]{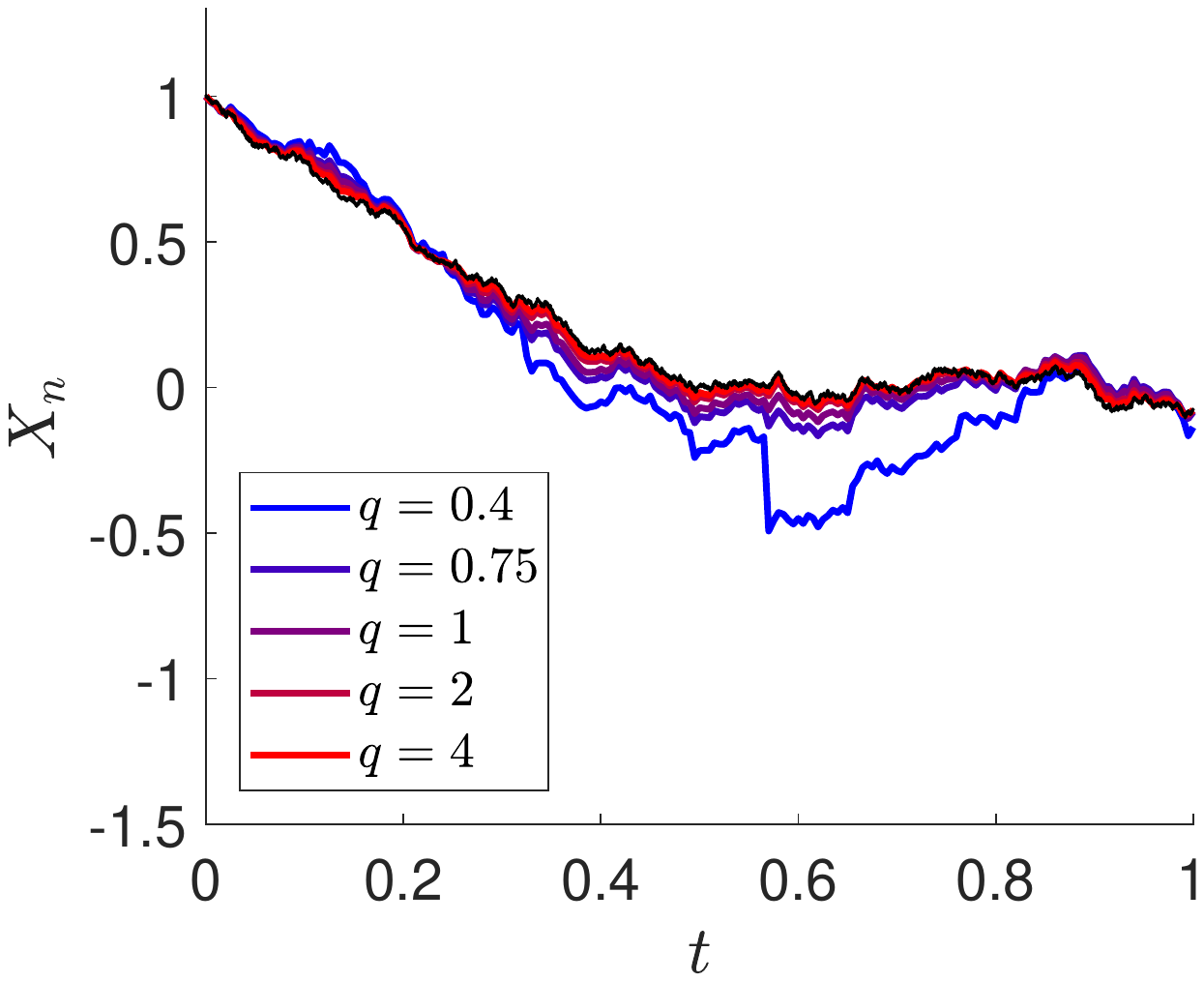}\hspace{8mm}
		\includegraphics[trim=140 240 140 240, scale=0.42]{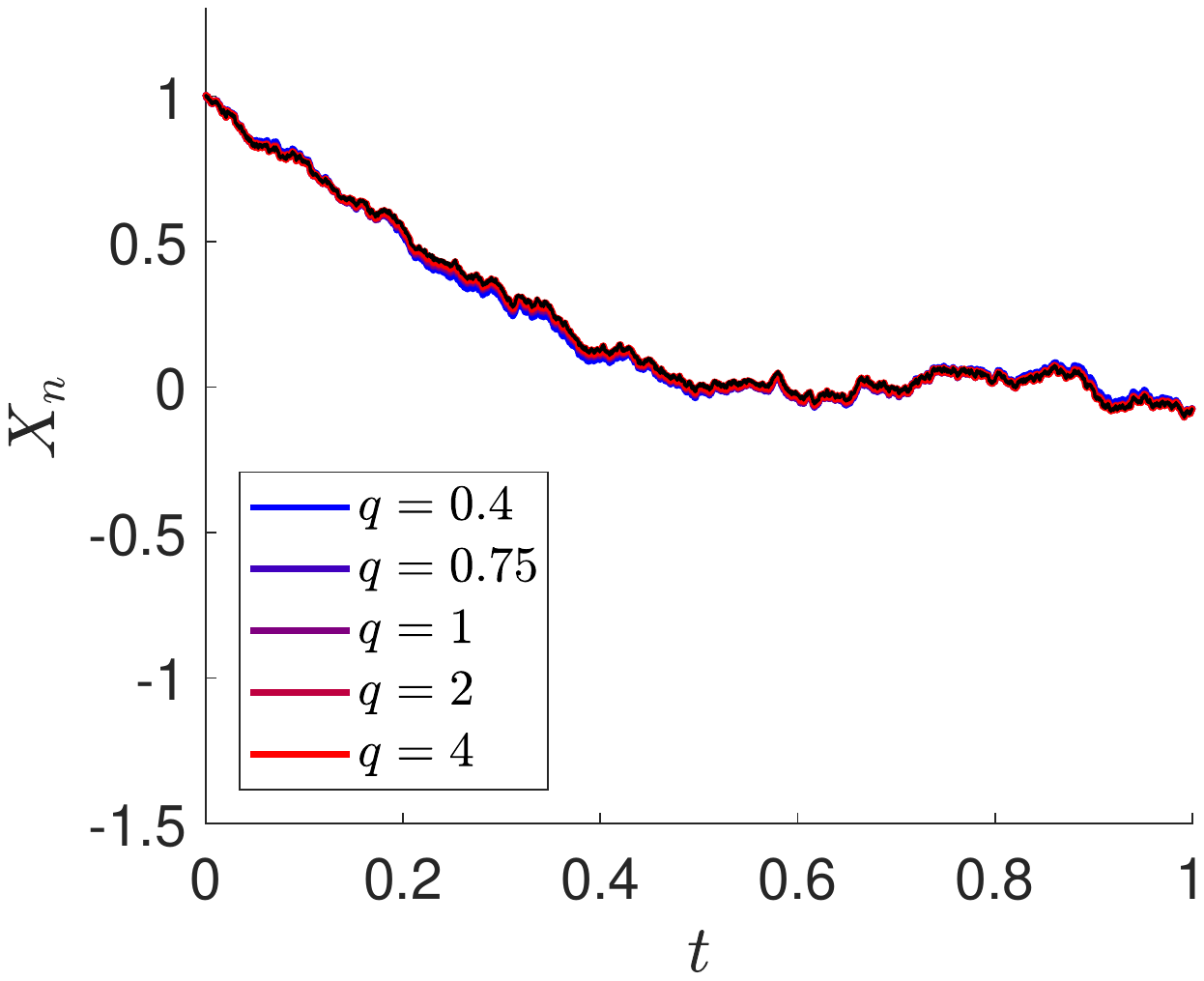}
	\end{center}
\vspace{-1em}
\caption{Convergence of discrete-time paths to continuous-time path. The top row of panels uses $q = 2$ and the bottom row uses $\lambda = 0.5$. The left column has $T/\Delta t = 10$, the middle column $T/\Delta t = 200$, and the right column $T/\Delta t = 10,000$. Other parameter values are $B = 1$, $C=1$, $D=1$, $K=0.1$, $\gamma = 1$, $\sigma = 0.2$, $\kappa = 1$, $\eta = 2$, $\widehat{A}_0 = 0$, $\Sigma_0 = 1$. In each panel, the black curve represents the optimal state path for the continuous-time problem (either the classical or exploratory version which result in the same path). \label{fig:discrete_convergence}}
\end{figure}

In each panel of Figure \ref{fig:discrete_convergence}, the black curve shows the optimal trajectory in continuous-time for one particular outcome of the latent processes $A$ and $W^{(1)}$.
We also show trajectories in discrete-time when the control is drawn from the optimal exploratory distribution in \eqref{eqn:pi_star} for various values of the exploration reward parameter, $\lambda$, and the Tsallis entropy parameter, $q$. Each column of panels corresponds to a different granularity of time discretization. We see that as $\Delta t\rightarrow 0$ the paths of the state process under the optimal exploratory control approach the path of the optimal state process in continuous-time. This effect is essentially due to the law of large numbers applied to the sum of exploratory controls in \eqref{eqn:relaxed_X}. As the number of terms in this sum approaches infinity, the randomness attributed to the sequence of independent uniforms $\{u_n\}_{n=0}^{N-1}$ disappears. We are still left with an element of randomness contributed by the conditional distributions $\{F_n^\pi\}_{n=0}^{N-1}$ which yields a stochastic drift component to the process, as well as the Brownian innovations inherent in the unaffected process $Y$. This figure also shows that for values of the Tsallis entropy parameter $q$ close to the lower bound of $1/3$ the fat-tailed behaviour of the optimal distribution can result in significant deviations from the optimal standard path (blue curve in bottom left and bottom middle panels), but for fixed $q$ this deviation will always disappear as $\Delta t\rightarrow 0$.

\subsection{Discrete Solution Approximation}

In this section we use the solution of \eqref{eqn:BSDE} in continuous-time to act as an approximation of the corresponding processes which appear in the discrete-time optimal exploratory control of Proposition \ref{prop:discrete_optimal_exploration}. The complicated form of \eqref{eqn:BSDeltaE} along with it's discrete nature means numerical methods must be used to solve the backward system, as was done to demonstrate its implementation in Section \ref{sec:numerical_demonstration}. As seen in \eqref{eqn:h_t}, the solution \eqref{eqn:BSDE} may be solved in closed form, thus better lending itself for more efficient implementation.

To this end, let $\tilde{\pi} = \{\pi_n\}_{n=0}^{N-1}$ be a sequence of conditional densities given by \eqref{eqn:pi_star}, except with all occurrences of $h^{(1)}_n$ and $h^{(2)}_n$ replaced with $h^{(1)}_{t_n}$ and $h^{(2)}_{t_n}$ respectively.\footnote{We have abused notation by referring to solutions to both the BS$\Delta$E and BSDE by $h^{(i)}$, but it should be clear from context that $h^{(i)}_{t_n}$ refers to the solution of the BSDE evaluated at time $t_n = n\,\Delta t$, whereas $h^{(i)}_n$ is the solution of the BS$\Delta$E evaluated at the $n^{th}$ step.} In Figure \ref{fig:discrete_approximation} we show paths resulting from the optimal control density along with this approximation. Unsurprisingly, the error between these processes reduces as $\Delta t$ decreases. Moreover, the error is typically smallest towards the endpoints of the time interval $[0,T]$. Both the optimal and approximate $h^{(i)}$ processes have the same terminal condition, and errors in computing an approximation to the optimal control accumulate as we move backwards in time, hence the largest discrepancies in the control process occur near $t=0$. Since the state process begins at the same constant regardless of whether an approximation is being used or not, the error increases from zero rapidly near $t=0$. However, the performance criterion having a terminal reward of the form $-B\,(X^\pi_N)^2$ implies that as time approaches $T$ the controlled state process has a soft target of $0$, regardless of whether the implemented control is optimal or an approximation. As both the optimal process and the approximation have the same target, they tend to come closer together again as the end of the horizon approaches.

\begin{figure}[ht]
	\begin{center}
		\includegraphics[trim=140 240 140 240, scale=0.50]{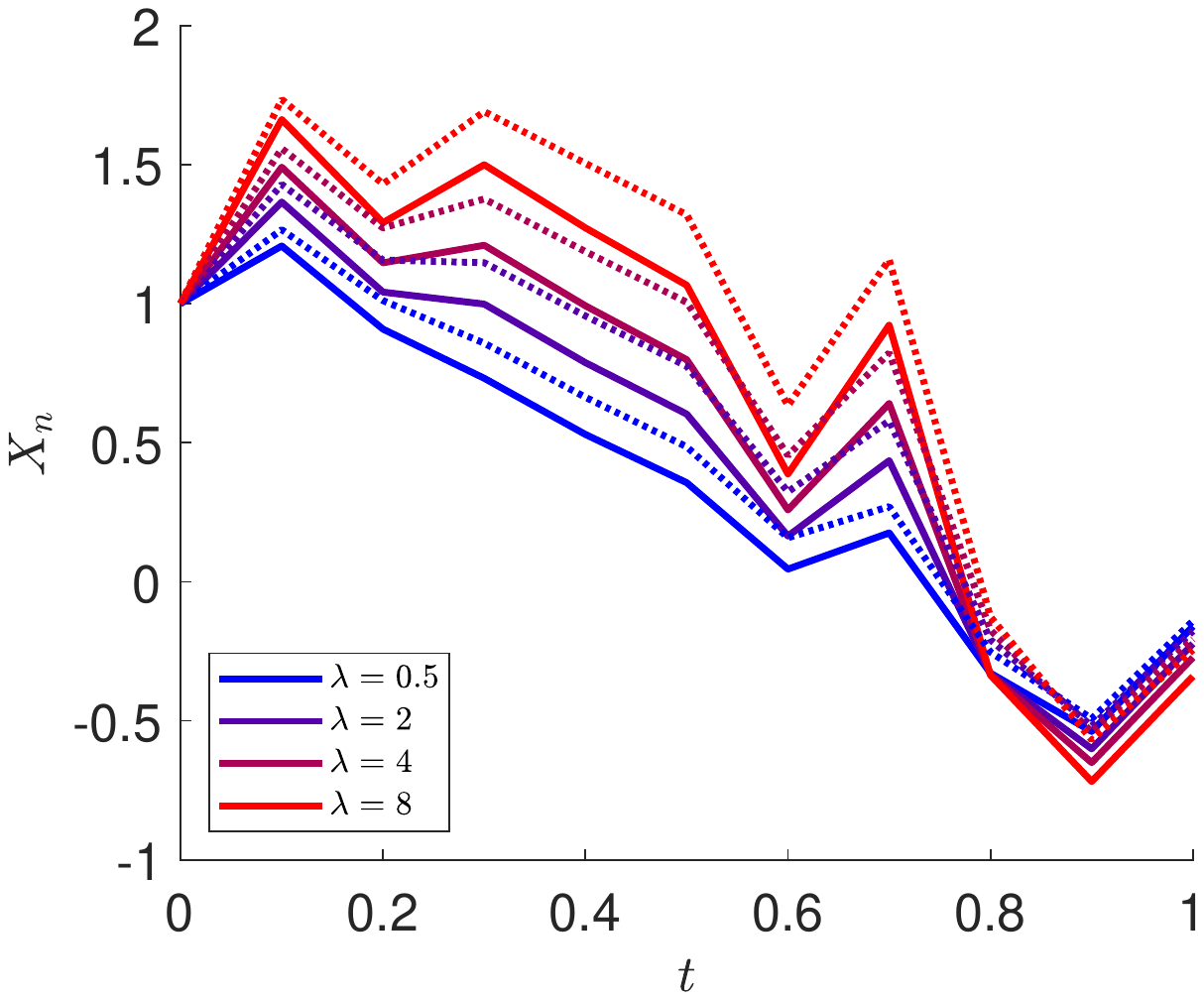}\hspace{10mm}
		\includegraphics[trim=140 240 140 240, scale=0.50]{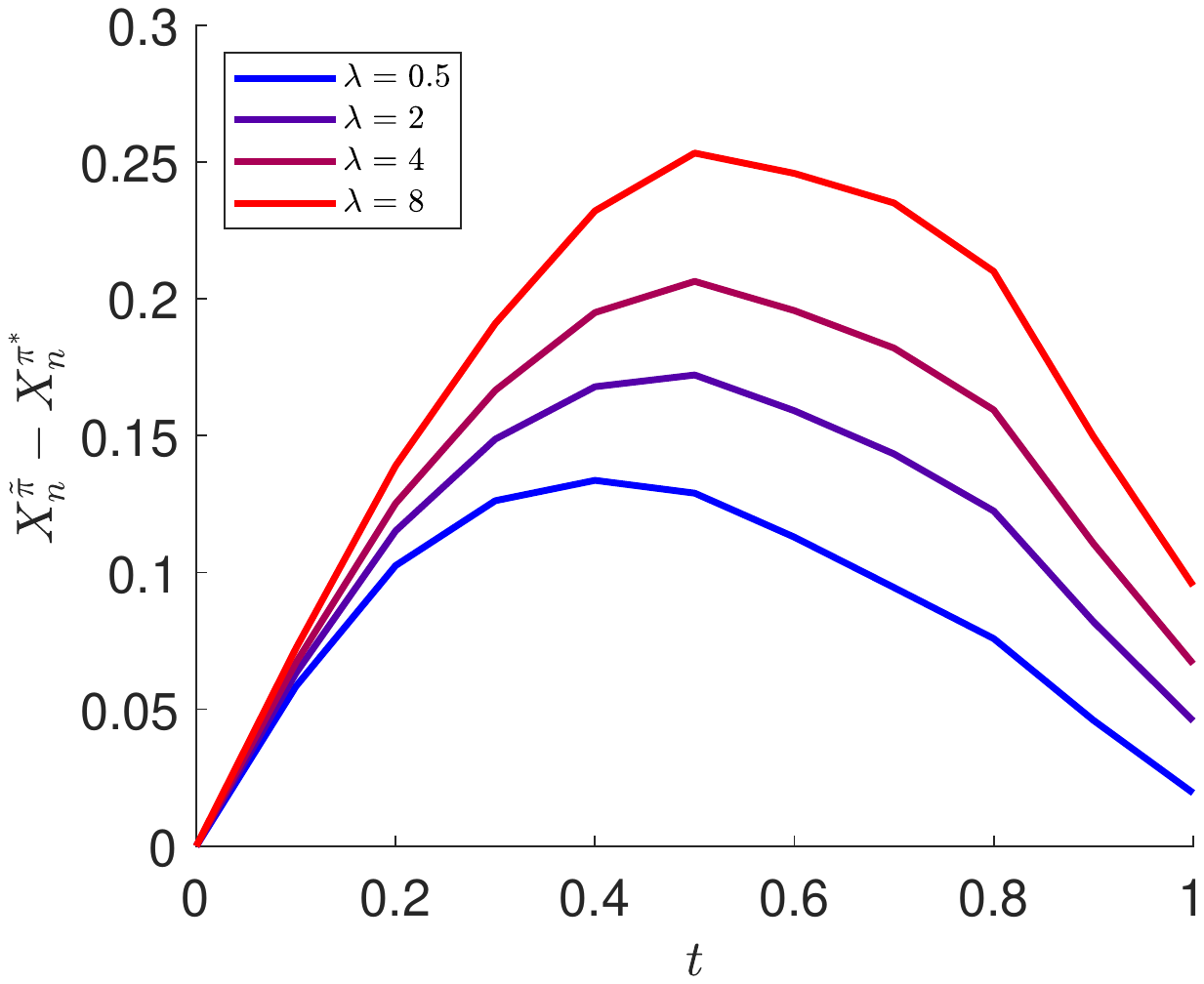}\\ \vspace{10mm}
		\includegraphics[trim=140 240 140 240, scale=0.50]{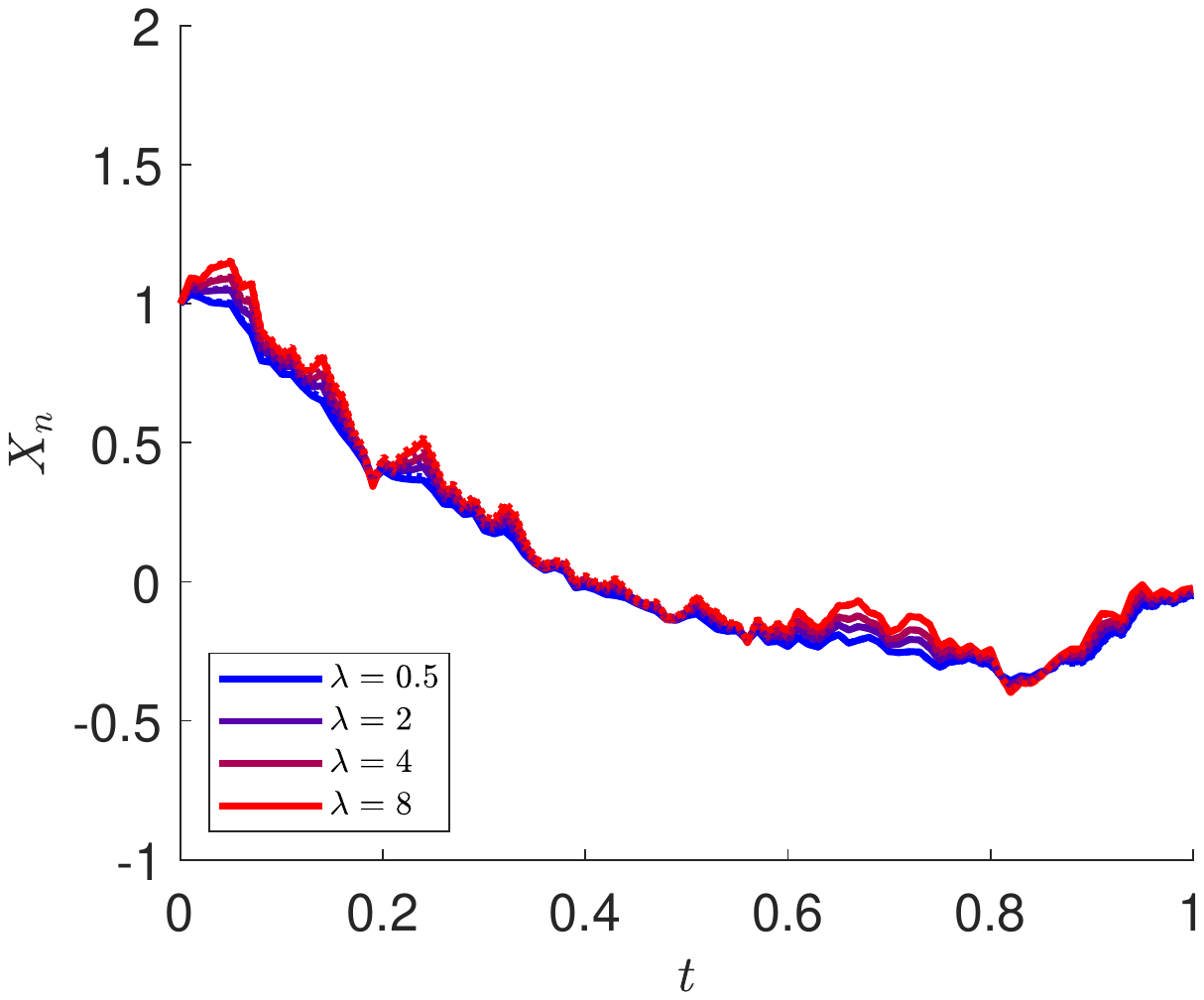}\hspace{10mm}
		\includegraphics[trim=140 240 140 240, scale=0.50]{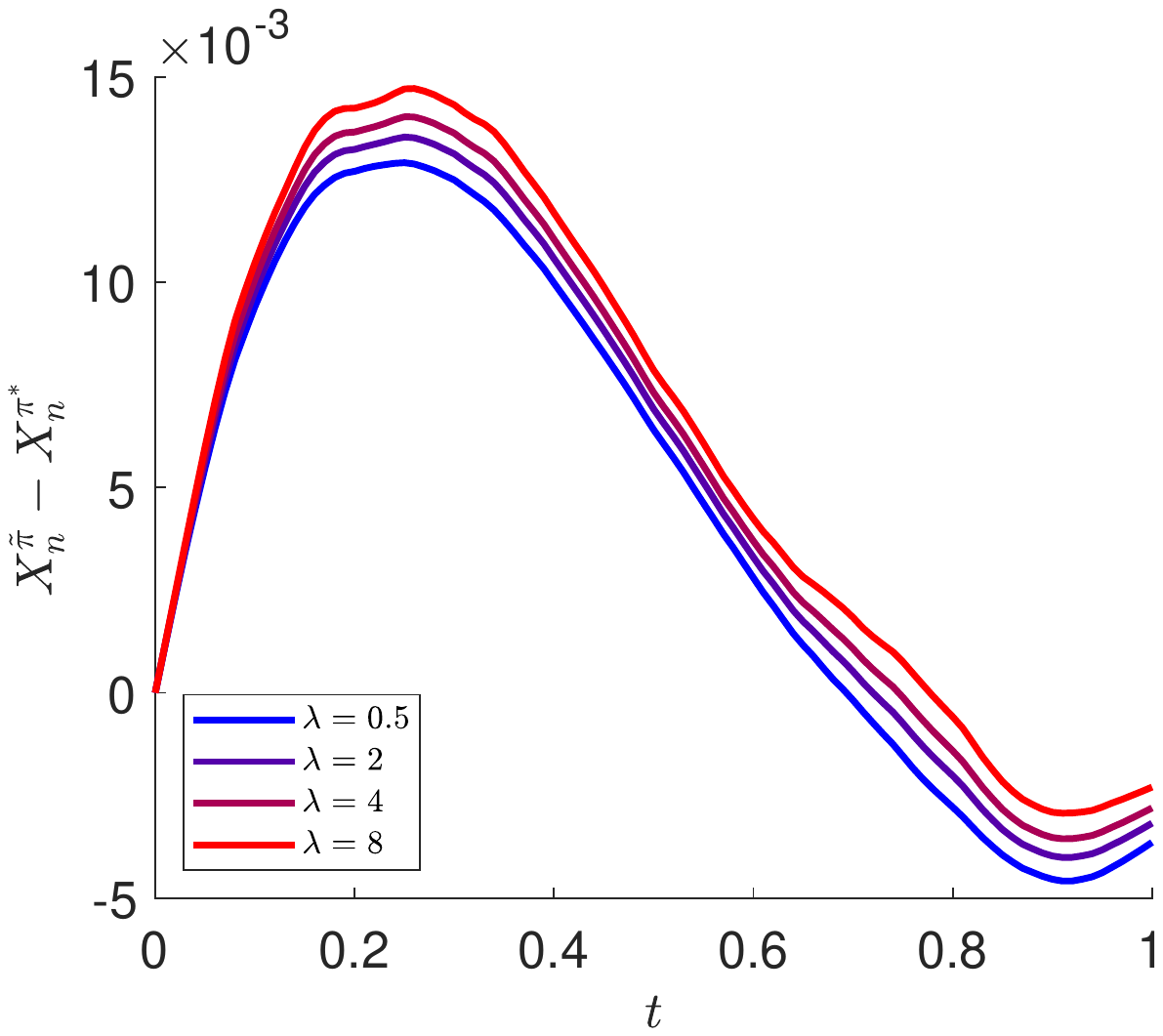}
	\end{center}
	\vspace{-1em}
	\caption{Approximation of discrete-time solution using closed form expressions relevant to continuous-time solution. In the top row, discretization is given by $T/\Delta t = 10$, and in the bottom row it is $T/\Delta t = 100$. The left column shows a realization of both the optimal discrete-time strategy $X_n^{\pi^*}$ (solid curves) and the approximation $X_n^{\tilde{\pi}}$ (dotted curves) resulting from implementing a strategy with $h^{(i)}_n$ replaced by $h^{(i)}_{t_n}$ in Proposition \ref{prop:discrete_optimal_exploration}. The right column shows the difference between these processes. Other parameter values are $B = 1$, $C=1$, $D=1$, $K=0.1$, $\gamma = 1$, $\sigma = 0.2$, $\kappa = 1$, $\eta = 2$, $\widehat{A}_0 = 0$, $\Sigma_0 = 1$, $q = 1.2$. \label{fig:discrete_approximation}}
\end{figure}

\section{Extension to $Q$-Learning}\label{sec:Q_learning}

In this section, we adapt the Tsallis entropy exploration reward to a model-agnostic reinforcement learning setting -- the so-called $Q$-learning paradigm. As such, this section adopts notation commonly used with Markov decision processes (MDPs) and $Q$-learning. For an overview see, e.g., \cite{jaimungal2022reinforcement}.

In a similar fashion to Sections \ref{sec:discrete_model_explore} and \ref{sec:continuous_model_explore}, we add a Tsallis entropy term to the classical agent's performance criterion  to incentivize exploration within the $Q$-learning framework. In this section, we work on discrete state and actions space denoted $\mcX$ and $\mcA$, respectively. Moreover, the agent chooses the distribution over actions conditional on state, denoted $\pi(\nu|X_n)$  -- here, as in soft $Q$-learning (See e.g., \cite{ziebart2010modeling, fox2015taming}), we call $\pi$ the policy.
The agent's Tsallis regularized performance criterion is defined as 
\begin{align*}
	V^\pi(x) &= \E^{\P_\pi}\left[
	\left.
	\sum_{n=1}^\infty \zeta^{n-1} \,(r_n + \lambda\,S_q[\pi_n(\cdot|X_n)])
	\,\right|\,X_1 = x\,\right],
\end{align*}
where $\lambda>0$ is the exploration reward parameter, $\zeta\in(0,1)$ is a discount factor, $S_q$ is the discrete analogue of the Tsallis entropy in Equation \eqref{eqn:entropy} given by
\begin{align}\label{eqn:entropy_discrete}
	S_q\left[\pi(\cdot|x)\right] &= 
	\left\{
	\begin{array}{cc}
		\displaystyle \frac{1}{q-1} \biggl(1 - \sum_{\nu\in\mcA} \left(\pi(\nu|x)\right)^q   \biggr)\,, 
		& q>0,  \,q\neq 1
		\\[2em]
		\displaystyle -\sum_{\nu\in\mcA} \pi_n(\nu|x)\,\log(\pi(\nu|x))\,, 
		& q = 1,
	\end{array}
	\right.
\end{align}
and the probability measure $\P_\pi$ is the one induced by the state transition probabilities $\P_\pi(X_{n+1}=x'\,|\,X=x)=\sum_{a\in\mcA} \P(X_{n+1}=x'\,|\,\nu=a,X=x)\pi(a\,|\,x)$. Defining the Tsallis regularized value function $V^*(x):=\sup_{\pi(\cdot|x)\in\mcP(\mcA)} V^\pi(x)$, where $\mcP(\mcA)$ denotes the space of probability measure on $\mcA$, 
the Tsallis regularized value function also satisfies a DPP which now takes the form
\begin{align*}
	V^*(x) &= \max_{\pi}\E^{\P_\pi}\biggl[r(x, X', \nu) + \zeta\,V^*(X') + \lambda\,S_q[\pi(\cdot|x)] \biggr]\,,
\end{align*}
where $(X',\nu)$ has distribution $\P_\pi(X'=x',\nu=a)= \P(X_{n+1}=x'|\nu=a,X=x)\pi(a\,|\,x)$.
Next, define the state-policy free energy function, also called the soft $Q$-function or $G$ function, as the value of taking an arbitrary policy (distribution over actions) $\pi\in\mcP$, for the current action, and then subsequently follow the, as yet unknown, optimal policy $\pi^*$. Then, the DPP implies that 
\begin{align}
	Q(x;\pi) :=&\, \E^{\P_\pi}\biggl[r(x,X',\nu) + \zeta\,V^*(X') + \lambda\,S_q[\pi(\cdot|x)]\biggr]\label{eqn:Q_explore}
	\\
	=&\, \E^{\P_\pi}\biggl[r(x,X',\nu) + \zeta\,\max_{\pi'\in\mcP}Q^*(X';\pi') + \lambda\,S_q[\pi(\cdot|x)]\biggr]\,,\nonumber
\end{align}
As the $Q$-function is defined for an arbitrary $\pi\in\mcP$, it may be evaluated on a point mass $\pi=\delta_a$ at a particular action $a\in\mcA$. Evaluating the $Q$-function with this particular choice leads to the classical (no exploration) $Q$-function and such a choice will be useful for the representation of the optimal exploratory strategy below.

For $q\neq 1$, expanding the expectation in $\eqref{eqn:Q_explore}$ yields
\begin{align}
	Q(x;\pi) &= \sum_{x'\in\mcX}\sum_{\nu\in\mcA} \biggl( r(x,x',\nu) + \zeta\,V^*(x') \biggr)\,\P(x'|x,\nu)\,\pi(\nu|x) + \frac{\lambda}{q-1}\biggl(1 - \sum_{\nu\in\mcA} (\pi(\nu|x))^q\biggr)\,.\label{eqn:Q_expanded}
\end{align}
Evaluating \eqref{eqn:Q_expanded} at $\pi = \delta_{a}$ for a particular $a \in\mcA$ gives
\begin{align}
	Q(x;\delta_{a}) &= \sum_{x'} \biggl( r(x,x',a) + \zeta\,V^*(x') \biggr)\,\P(x'|x,a)\,,
\end{align}
thus \eqref{eqn:Q_expanded} may be rewritten as
\begin{align}
	Q(x;\pi) &= \sum_{\nu\in\mcA} Q(x;\delta_\nu)\,\pi(\nu|x) + \frac{\lambda}{q-1}\biggl(1 - \sum_{\nu\in\mcA} (\pi(\nu|x))^q\biggr)\,.
\end{align}
Maximizing $Q(x;\pi)$ over $\pi$ gives the optimal distribution as\footnote{See Appendix \ref{proof:pi_star_Q} for these computations.}
\begin{align}\label{eqn:pi_star_Q}
	\pi^*(\nu|x) &= \left\{\begin{array}{cc}
			\biggl(\frac{q-1}{\lambda\,q}\biggr)^{\frac{1}{q-1}}\,\biggl[\psi(x) + Q(x;\delta_\nu)\biggr]^{\frac{1}{q-1}}_+\,, & q>1,
			\\[1em]
			\psi(x)\,\exp\biggl\{\frac{Q(x;\delta_\nu)}{\lambda}\biggr\}\,,
			& q = 1,
			\\[1em]
			\biggl(\frac{1-q}{\lambda\,q}\biggr)^{\frac{1}{q-1}}\,\biggl[\psi(x) - Q(x;\delta_\nu)\biggr]^{\frac{1}{q-1}}\,, & 0<q<1,
	\end{array}\right.
\end{align}
where $\psi(x)$ is a constant (for each $x\in\mcX$) chosen so that $\sum_{\nu\in\mcA} \pi^*(\nu|x) = 1$.
The case $q=1$ reduces to the so-called ``Boltzman'' policies obtained in \cite{fox2015taming}.

From the form of $\pi^*$ in \eqref{eqn:pi_star_Q} we see that the mode of the optimal policy is the value of $\nu$ which maximizes $Q(x;\delta_\nu)$, similar to the result in Propositions \ref{prop:discrete_optimal_exploration} and \ref{prop:continuous_optimal_exploration} where the location parameter of $\pi^*$ has the same feedback form as the optimal control $\nu^*$ of the standard problem without exploration. Indeed, the expression for $q=1$ is the softmax function applied to $Q(x;\delta_\nu)$, whereas for $q\neq1$ the expressions correspond to $q$-exponential variations of the softmax function. Further, for $q>1$, some actions may be assigned zero probability by $\pi^*$, similar to the behaviour seen with $q>1$ in equations \eqref{eqn:pi_star} and \eqref{eqn:pi_star_cont} where the optimal distribution has compact support. Unlike the result in those equations, however, the exclusion of some actions is not guaranteed. If the values of $Q(x;\delta_\nu)$ do not vary significantly with respect to $\nu$ or if $\lambda$ is relatively large, then the policy in \eqref{eqn:pi_star_Q} will assign positive probability to each action. When $q<1$ all actions are assigned strictly positive probability, similar to \eqref{eqn:pi_star} and \eqref{eqn:pi_star_cont} where the optimal policy has support on all of $\mathbb{R}$.

This optimal strategy may be used to develop a iterative $Q$-learning paradigm along the lines of \cite{todorov2005generalized} (see also the discussion on iLQG in \cite{jaimungal2022reinforcement}) but is outside the scope of this paper.

\section{Conclusions}\label{sec:conclusion}

We developed a grounded approach for exploratory controls in discrete-time with latent factors, and illuminated the need to carefully specify the information structure in specifying randomized controls. We also introduced Tsallis entropy as a reward for exploration so that the agent has the ability to further tailor the distribution of their randomized actions, generalizing beyond the Gaussian distribution to the class of $q$-Gaussian distributions. In particular, the agent may restrict their randomized actions to a compact set to cap the riskiness associated with a single action or may implement a distribution with fat tails to exhibit more exploration, depending on the specific problem setting. A comparison of the optimal exploratory control in discrete-time with that of an analogous model in continuous-time shows that the more tractable continuous-time solution may be used as an approximation to the more realistic discrete-time formulation. Finally, we show how these techniques may be adapted in the context of $Q$-learning so that the implemented actions employ more reasonable exploration rather than being maximally greedy.

Future work includes using our framework in a continual learning setting, where an agent first makes an approximation of environment that falls within our assumptions, then explores using our optimal control, and then updates the model with the new information they have received from the environment, re-optimise using our optimal exploratory control, and repeat. Further directions left open include how to generalize these results to the multi-agent case and/or mean-field game settings. As well, as applying the approach in real world settings on  data.

%\appendix
\section*{Appendix}
\renewcommand{\thesubsection}{\Alph{subsection}}

\subsection{Proof of Proposition \ref{prop:discrete_optimal_exploration}}
\label{proof:exploratory_discrete}
 
The dynamic programming principle for $H^{(\lambda)}$ in \eqref{eqn:exploration_dynamic_value} takes the form
\begin{align}
	H^{(\lambda)}_n &= \sup_{\pi_n\in\mathcal{A}^r_n}\E\biggl[H^{(\lambda)}_{n+1} + (D_n\,X_n\,\nu_n - C_n\,X_n^2 - K_n\,\nu_n^2)\,\Delta t + \lambda\,S_q[\pi_n]\,\Delta t\biggl|\mcF^u_n\biggr]\,,\label{eqn:DPP}
\end{align}
where we suppress dependence of each process on $\pi_n$ to simplify notation. We make the ansatz $H^{(\lambda)}_n = h^{(0)}_n + h^{(1)}_n\,X_n + h^{(2)}_n\,X_n^2$ where $h^{(0)}$ and $h^{(1)}$ are $\mcF_n$-adapted (the filtration generated by $Y$) and $h^{(2)}$ is deterministic. Substituting this ansatz for $H^{(\lambda)}$ on both sides of \eqref{eqn:DPP} and using equations \eqref{eqn:Y_filter} and \eqref{eqn:relaxed_X} yields
\begin{equation}
	\begin{split}
		& h^{(0)}_n + h^{(1)}_n\,X_n + h^{(2)}_n\,X_n^2\\
		& \hspace{10mm} = \sup_{\pi_n\in\mathcal{A}^r_n} \biggl\{ \E[h^{(0)}_{n+1}|\mcF^u_n] + \E[h^{(1)}_{n+1}|\mcF^u_n]\,X_n + \E[h^{(1)}_{n+1}|\mcF^u_n]\,\widehat{A}_n\,\Delta t + \E[h^{(1)}_{n+1}\,\nu_n|\mcF^u_n]\,\gamma_n\,\Delta t \\
		& \hspace{15mm} + \E[h^{(1)}_{n+1}\,\Delta \widehat{M}_n|\mcF^u_n] + h^{(2)}_{n+1}\, \biggl(X_n^2 + \widehat{A}_n^2\,(\Delta t)^2 + \E[\nu_n^2|\mcF^u_n]\,\gamma_n^2\,(\Delta t)^2 + \widehat{\sigma}_n^2\,\Delta t\\
		& \hspace{15mm} + 2\,X_n\,\widehat{A}_n\,\Delta t + 2\,X_n\,\gamma_n\,\E[\nu_n|\mcF^u_n]\,\Delta t + 2\,X_n\,\E[\Delta\widehat{M}_n|\mcF^u_n] + 2\,\widehat{A}_n\,\gamma_n\,\E[\nu_n|\mcF^u_n]\,(\Delta t)^2 \\
		& \hspace{15mm} + 2\,\widehat{A}_n\E[\Delta\widehat{M}_n|\mcF^u_n]\,\Delta t + \gamma_n\,\E[\nu_n\,\Delta\widehat{M}_n|\mcF^u_n]\,\Delta t  \biggr) + D_n\,X_n\,\E[\nu_n|\mcF^u_n]\,\Delta t\\
		& \hspace{15mm} + C_n\,X_n^2\,\Delta t - K_n\,\E[\nu_n^2|\mcF^u_n]\,\Delta t + \lambda\,S_q[\pi_n]\,\Delta t  \biggr\}\,.
	\end{split}\label{eqn:DPP_expand}
\end{equation}
where $\Delta\widehat{M}_n = \widehat{M}_{n+1} - \widehat{M}_n$ and $\widehat{\sigma}_n^2 = \E[(\Delta\widehat{M}_n)^2|\mcF^u_n]$. Part of the ansatz that has been made is that $h^{(0)}$ and $h^{(1)}$ are $\mcF_n$-adapted, hence
\begin{align*}
	\E[h^{(0)}_{n+1}|\mcF^u_n] &= \E[h^{(0)}_{n+1}|\mcF_n]\\
	\E[h^{(1)}_{n+1}|\mcF^u_n] &= \E[h^{(1)}_{n+1}|\mcF_n]\\
	\E[h^{(1)}_{n+1}\,\Delta\widehat{M}_n|\mcF^u_n] &= \E[h^{(1)}_{n+1}\,\Delta\widehat{M}_n|\mcF_n]
\end{align*} 
by the independence of $Y$ and $\{u_n\}_{n=0}^{N-1}$. We also have that $\E[\Delta\widehat{M}_n|\mcF^u_n] = 0$ in \eqref{eqn:DPP_expand}. Other relevant conditional expectations in \eqref{eqn:DPP_expand} can be computed as
\begin{align*}
	\E[h^{(1)}_{n+1}\,\nu_n|\mcF^u_n] &= \E\biggl[\E[h^{(1)}_{n+1}\,\nu_n|\mcF^u_n, u_n]\biggl|\mcF^u_n\biggr]\\
	&= \E\biggl[\nu_n\,\E[h^{(1)}_{n+1}|\mcF^u_n,u_n]\biggl|\mcF^u_n\biggr]\\
	&= \E[\nu_n|\mcF^u_n]\,\E[h^{(1)}_{n+1}|\mcF^u_n,u_n]\\
	&= \int_{-\infty}^\infty \nu\,\pi_n(\nu)\,d\nu\,\E[h^{(1)}_{n+1}|\mcF_n]
 \end{align*}
 and
 \begin{align*}
	\E[\nu_n\,\Delta\widehat{M}_n|\mcF^u_n] &= 	\E\biggl[\E[\nu_n\,\Delta\widehat{M}_n|\mcF^u_n,u_n]\biggl|\mcF^u_n\biggr]\\
	&= \E\biggl[\nu_n\E[\Delta\widehat{M}_n|\mcF^u_n,u_n]\biggl|\mcF^u_n\biggr]\\
	&= 0\,.
\end{align*}
By using \eqref{eqn:nu_bar} and \eqref{eqn:nu_sigma}, and grouping terms by the power of $\nu$, we rewrite \eqref{eqn:DPP_expand} as
\begin{equation}
	\begin{split}
		& h^{(0)}_n + h^{(1)}_n\,X_n + h^{(2)}_n\,X_n^2\\
		& \hspace{10mm} = \E[h^{(0)}_{n+1}|\mcF_n] + \E[h^{(1)}_{n+1}|\mcF_n]\,(X_n + \widehat{A}_n\,\Delta t) + \E[h^{(1)}_{n+1}\,\Delta\widehat{M}_n|\mcF_n]\\
		& \hspace{15mm} + h^{(2)}_{n+1}\,\biggl(X_n^2 + \widehat{A}_n^2\,(\Delta t)^2 + \widehat{\sigma}_n^2\,\Delta t + 2\,X_n\,\widehat{A}_n\,\Delta t \biggr) - C_n\,X_n^2\,\Delta t\\
		& \hspace{15mm} + \sup_{\pi_n\in\mathcal{A}^r_n}\biggl\{ \biggl( \E[h^{(1)}_{n+1}|\mcF_n]\,\gamma_n + 2\,h^{(2)}_{n+1}\,\widehat{A}_n\,\gamma_n\,\Delta t + (2\,h^{(2)}_{n+1}\,\gamma_n + D_n)\,X_n     \biggr)\,\int_{-\infty}^\infty \nu\,\pi_n(\nu)\,d\nu \\
		& \hspace{15mm} - (K_n - h^{(2)}_{n+1}\,\gamma_n^2\,\Delta t)\,\int_{-\infty}^\infty \nu^2\,\pi_n(\nu)\,d\nu + \lambda\,S_q[\pi_n] \biggr\}\,\Delta t\,.
	\end{split}
\end{equation}
We introduce $L_n = \E[h^{(1)}_{n+1}|\mcF_n]\,\gamma_n + 2\,h^{(2)}_{n+1}\,\widehat{A}_n\,\gamma_n\,\Delta t + (2\,h^{(2)}_{n+1}\,\gamma_n + D_n)\,X_n$ and $Q_n = K_n - h^{(2)}_{n+1}\,\gamma_n^2\,\Delta t$ then complete the square with respect to $\nu$ to write
\begin{equation}
	\begin{split}
		& h^{(0)}_n + h^{(1)}_n\,X_n + h^{(2)}_n\,X_n^2\\
		& \hspace{10mm} = \E[h^{(0)}_{n+1}|\mcF_n] + \E[h^{(1)}_{n+1}|\mcF_n]\,(X_n + \widehat{A}_n\,\Delta t) + \E[h^{(1)}_{n+1}\,\Delta\widehat{M}_n|\mcF_n]\\
		& \hspace{15mm} + h^{(2)}_{n+1}\,\biggl(X_n^2 + \widehat{A}_n^2\,(\Delta t)^2 + \widehat{\sigma}_n^2\,\Delta t + 2\,X_n\,\widehat{A}_n\,\Delta t \biggr) - C_n\,X_n^2\,\Delta t\\
		& \hspace{15mm} + \frac{L_n^2}{4\,Q_n}\,\Delta t + \sup_{\pi_n\in\mathcal{A}^r_n}\biggl\{\lambda\,S_q[\pi_n] - \int_{-\infty}^\infty Q_n\,\biggl(\nu - \frac{L_n}{2\,Q_n}\biggr)^2\,\pi_n(\nu)\,d\nu\biggr\}\,\Delta t
	\end{split}\label{eqn:DPP_expand2}
\end{equation}
To perform the maximization in \eqref{eqn:DPP_expand2}, we introduce a Lagrange multiplier $\psi_n$ to enforce the constraint $\int_{-\infty}^\infty \pi_n(\nu)\,d\nu = 1$ and for each $\nu$ a KKT multiplier $\zeta_n(\nu)$ for the constraint $\pi_n(\nu) \geq 0$. The Lagrangian associated with this (strictly convex) constrained maximization problem is
\begin{equation}
	\begin{split}
		\mathcal{L}(\pi_n;\psi_n,\zeta_n) &= \int_{-\infty}^\infty -Q_n\,\biggl(\nu - \frac{L_n}{2\,Q_n}\biggr)^2\,\pi_n(\nu)\,d\nu + \frac{\lambda}{q-1}\,\biggl(1 - \int_{-\infty}^\infty \pi^q_n(\nu)\,d\nu\biggr)\\
		&\hspace{10mm} + \psi_n\biggl(\int_{-\infty}^\infty \pi_n(\nu)\,d\nu - 1\biggr) + \int_{-\infty}^\infty \zeta_n(\nu)\,\pi_n(\nu)\,d\nu\,.
	\end{split}
\end{equation}
The Gateaux derivative of $\mathcal{L}$ with respect to $\pi_n$ is given by
\begin{equation}
	\begin{split}
		<D\mathcal{L}(\pi_n;\psi_n,\zeta_n), f> &= \int_{-\infty}^\infty f(\nu)\,\biggl( \zeta_n(\nu) + \psi_n - \frac{\lambda\,q}{q-1}\,\pi^{q-1}_n(\nu) - Q_n\,(\nu - \frac{L_n}{2\,Q_n})^2     \biggr)\,d\nu\,.
	\end{split}
\end{equation}
In order to attain a maximizer, first order conditions imply that the conditional density $\pi_n$, Lagrange multiplier $\psi_n$, and KKT multiplier $\zeta_n$ must then satisfy
\begin{align}
	\pi^*_n(\nu) &= \biggl(\frac{q-1}{\lambda\,q}\biggr)^{\frac{1}{q-1}}\biggl(\zeta_n(\nu) + \psi_n - Q_n\,\biggl(\nu - \frac{L_n}{2\,Q_n}\biggr)^2\biggr)^{\frac{1}{q-1}}\,.\label{eqn:first_order_condition}
\end{align}
Additionally, the KKT multiplier must be chosen so that $\zeta_n(\nu)\geq 0$, $\pi_n(\nu)\geq 0$, and $\zeta_n(\nu)\,\pi_n(\nu) = 0$. This is seen to be achieved by
\begin{align}
	\zeta_n(\nu) &= \biggl[Q_n\biggl(\nu - \frac{L_n}{2\,Q_n}\biggr)^2 - \psi_n\biggr]_+\,,
\end{align}
which reduces \eqref{eqn:first_order_condition} to
\begin{align}
	\pi^*_n(\nu) &= \biggl(\frac{q-1}{\lambda\,q}\biggr)^{\frac{1}{q-1}}\biggl[\psi_n - Q_n\,\biggl(\nu - \frac{L_n}{2\,Q_n}\biggr)^2\biggr]^{\frac{1}{q-1}}_+\,.\label{eqn:pi_proof}
\end{align}
The Lagrange multiplier $\psi_n$ is found by enforcing $\int_{-\infty}^\infty \pi_n(\nu)\,d\nu = 1$. Using a simple substitution, this integral can be written as
\begin{align*}
	\int_{-\infty}^\infty \pi_n(\nu)\,d\nu &= \biggl(\frac{q-1}{\lambda\,q}\biggr)^{\frac{1}{q-1}}\,\frac{\psi_n^{\frac{1}{q-1}+\frac{1}{2}}}{\sqrt{Q_n}}\,\int_{-1}^1 (1-x^2)^{\frac{1}{q-1}}\,dx\\
	1 &= \biggl(\frac{q-1}{\lambda\,q}\biggr)^{\frac{1}{q-1}}\,\frac{\psi_n^{\frac{1}{q-1}+\frac{1}{2}}}{\sqrt{Q_n}}\,\frac{\sqrt{\pi}\,\Gamma(\frac{1}{q-1}+1)}{\Gamma(\frac{1}{q-1} + \frac{3}{2})}\,,
\end{align*}
where the integral with respect to $x$ can be found in \cite{gradshteyn2014table} Section 3.251. Solving for $\psi_n$ gives
\begin{align*}
	\psi_n &= \Biggl[\frac{\Gamma(\frac{1}{q-1}+\frac{3}{2})}{\sqrt{\pi}\,\Gamma(\frac{1}{q-1}+1)}\biggl(\frac{\lambda\,q}{q-1}\biggr)^{\frac{1}{q-1}}\sqrt{K_n - h^{(2)}_{n+1}\,\gamma_n^2\,\Delta t}\Biggr]^{\frac{1}{\frac{1}{q-1}+\frac{1}{2}}}\,,
\end{align*}
as in the statement of the proposition. Substituting the maximizer $\pi^*_n$ from \eqref{eqn:pi_proof} into the supremum term in \eqref{eqn:DPP_expand2} and evaluating the integrals using \cite{gradshteyn2014table} Section 3.251 gives
\begin{equation}
	\begin{split}
		& \sup_{\pi_n\in\mathcal{A}^r_n} \biggl\{\lambda\,S_q[\pi_n] - \int_{-\infty}^\infty Q_n\,\biggl(\nu - \frac{L_n}{2\,Q_n}\biggr)^2\,\pi_n(\nu)\,d\nu\biggr\}\\
		& \hspace{15mm} = \frac{\lambda}{q-1} - \biggl(\frac{q-1}{\lambda\,q}\biggr)^{\frac{1}{q-1}}\frac{\psi_n^{\frac{1}{q-1}+\frac{3}{2}}\,\sqrt{\pi}}{\sqrt{Q_n}\,\Gamma(\frac{1}{q-1}+\frac{5}{2})}\biggl(\frac{1}{2}\,\Gamma\biggl(\frac{1}{q-1}+1\biggr) + \frac{1}{q}\,\Gamma\biggl(\frac{1}{q-1} + 2\biggr)\biggr)\\
		& \hspace{15mm} = \frac{\lambda}{q-1} - \biggl(\frac{q-1}{\lambda\,q}\biggr)^{\frac{1}{q-1}}\frac{\psi_n^{\frac{1}{q-1}+\frac{3}{2}}\,\sqrt{\pi}}{\sqrt{Q_n}\,\Gamma(\frac{1}{q-1}+\frac{5}{2})}\biggl(\frac{1}{2}\,\Gamma\biggl(\frac{1}{q-1}+1\biggr) + \frac{1}{q-1}\,\Gamma\biggl(\frac{1}{q-1} + 1\biggr)\biggr)\\
		& \hspace{15mm} = \frac{\lambda}{q-1} - \biggl(p+\frac{1}{2}\biggr)\,\biggl(\frac{q-1}{\lambda\,q}\biggr)^p \frac{\psi_n^{p+\frac{3}{2}}}{\sqrt{Q_n}}\,\frac{\sqrt{\pi}\,\Gamma(p+1)}{\Gamma(p+\frac{5}{2})}\,,
	\end{split}\nonumber
\end{equation}
where $p = \frac{1}{q-1}$. Thus the dynamic programming principle now reads
\begin{equation}
	\begin{split}
		& h^{(0)}_n + h^{(1)}_n\,X_n + h^{(2)}_n\,X_n^2\\
		& \hspace{10mm} = \E[h^{(0)}_{n+1}|\mcF_n] + \E[h^{(1)}_{n+1}|\mcF_n]\,(X_n + \widehat{A}_n\,\Delta t) + \E[h^{(1)}_{n+1}\,\Delta\widehat{M}_n|\mcF_n]\\
		& \hspace{15mm} + h^{(2)}_{n+1}\,\biggl(X_n^2 + \widehat{A}_n^2\,(\Delta t)^2 + \widehat{\sigma}_n^2\,\Delta t + 2\,X_n\,\widehat{A}_n\,\Delta t \biggr) - C_n\,X_n^2\,\Delta t\\
		& \hspace{15mm} + \frac{(\E[h^{(1)}_{n+1}|\mcF_n]\,\gamma_n + 2\,h^{(2)}_{n+1}\,\widehat{A}_n\,\gamma_n\,\Delta t + (2\,h^{(2)}_{n+1}\,\gamma_n + D_n)\,X_n)^2}{4\,(K_n - h^{(2)}_{n+1}\,\gamma_n^2\,\Delta t)}\,\Delta t\\
		& \hspace{15mm} + \frac{\lambda}{q-1}\,\Delta t - \biggl(p+\frac{1}{2}\biggr)\,\biggl(\frac{q-1}{\lambda\,q}\biggr)^p \frac{\psi_n^{p+\frac{3}{2}}}{\sqrt{K_n - h^{(2)}_{n+1}\,\gamma_n^2\,\Delta t}}\,\frac{\sqrt{\pi}\,\Gamma(p+1)}{\Gamma(p+\frac{5}{2})}\,\Delta t\,.
	\end{split}
\end{equation}
Expanding and grouping by powers of $X_n$ gives
\begin{equation}
	\begin{split}
		& h^{(0)}_n + h^{(1)}_n\,X_n + h^{(2)}_n\,X_n^2\\
		& \hspace{10mm} = \E[h_{n+1}^{(0)}|\mcF_n] + \E[h_{n+1}^{(1)}|\mcF_n]\,\widehat{A}_n\,\Delta t + \E[h_{n+1}^{(1)}\,\Delta \widehat{M}_n|\mcF_n] + h_{n+1}^{(2)}\,\widehat{A}_n^2\,(\Delta t)^2\\
		& \hspace{15mm} + h_{n+1}^{(2)}\,\widehat{\sigma}_n^2\,\Delta t + \frac{(\E[h_{n+1}^{(1)}|\mcF_n]\,\gamma_n + 2\,h_{n+1}^{(2)}\,\widehat{A}_n\,\gamma_n\,\Delta t)^2}{4\,(K_n - h_{n+1}^{(2)}\,\gamma_n^2\,\Delta t)}\,\Delta t + \frac{\lambda}{q-1}\,\Delta t\\
		& \hspace{15mm} - \biggl(p+\frac{1}{2}\biggr)\,\biggl(\frac{q-1}{\lambda\,q}\biggr)^p \frac{\psi_n^{p+\frac{3}{2}}}{\sqrt{K_n - h^{(2)}_{n+1}\,\gamma_n^2\,\Delta t}}\,\frac{\sqrt{\pi}\,\Gamma(p+1)}{\Gamma(p+\frac{5}{2})}\,\Delta t\\
		& \hspace{15mm} + \biggl(\E[h_{n+1}^{(1)}|\mcF_n] + 2\,h_{n+1}^{(2)}\,\widehat{A}_n\,\Delta t + \frac{(\E[h_{n+1}^{(1)}|\mcF_n]\,\gamma_n + 2\,h_{n+1}^{(2)}\,\widehat{A}_n\,\Delta t)\,(2\,h_{n+1}^{(2)}\,\gamma_n + D_n)}{2\,(K_n - h_{n+1}^{(2)}\,\gamma_n\,\Delta t)}\,\Delta t\biggr)\,X_n\\
		& \hspace{15mm} + \biggl(h_{n+1}^{(2)} - C_n\,\Delta t + \frac{(2\,h_{n+1}^{(2)}\,\gamma_n + D_n)^2}{4\,(K_n - h_{n+1}^{(2)}\,\gamma_n\,\Delta t)}\,\Delta t\biggr)\,X_n^2\,.
	\end{split}
\end{equation}
By matching the coefficients on each power of $X_n$ we arrive at the recursion equations
\begin{subequations}
	\begin{align}
		\begin{split}
			h_n^{(0)} &= \E[h_{n+1}^{(0)}|\mcF_n] + \E[h_{n+1}^{(1)}|\mcF_n]\,\widehat{A}_n\,\Delta t + \E[h_{n+1}^{(1)}\,\Delta \widehat{M}_n|\mcF_n] + h_{n+1}^{(2)}\,\widehat{A}_n^2\,(\Delta t)^2\\
				& \hspace{10mm} + h_{n+1}^{(2)}\,\widehat{\sigma}_n^2\,\Delta t + \frac{(\E[h_{n+1}^{(1)}|\mcF_n]\,\gamma_n + 2\,h_{n+1}^{(2)}\,\widehat{A}_n\,\gamma_n\,\Delta t)^2}{4\,(K_n - h_{n+1}^{(2)}\,\gamma_n^2\,\Delta t)}\,\Delta t + \frac{\lambda}{q-1}\,\Delta t\\
				& \hspace{10mm} - \biggl(p+\frac{1}{2}\biggr)\,\biggl(\frac{q-1}{\lambda\,q}\biggr)^p \frac{\psi_n^{p+\frac{3}{2}}}{\sqrt{K_n - h^{(2)}_{n+1}\,\gamma_n^2\,\Delta t}}\,\frac{\sqrt{\pi}\,\Gamma(p+1)}{\Gamma(p+\frac{5}{2})}\,\Delta t
		\end{split}\\
		\begin{split}
			h_n^{(1)} &= \E[h_{n+1}^{(1)}|\mcF_n] + 2\,h_{n+1}^{(2)}\,\widehat{A}_n\,\Delta t\\
			& \hspace{15mm} + \frac{(\E[h_{n+1}^{(1)}|\mcF_n]\,\gamma_n + 2\,h_{n+1}^{(2)}\,\widehat{A}_n\,\Delta t)\,(2\,h_{n+1}^{(2)}\,\gamma_n + D_n)}{2\,(K_n - h_{n+1}^{(2)}\,\gamma_n\,\Delta t)}\,\Delta t\\
		\end{split}\label{eqn:prf_BSDeltaE1} \\
		h_n^{(2)} &= h_{n+1}^{(2)} - C_n\,\Delta t + \frac{(2\,h_{n+1}^{(2)}\,\gamma_n + D_n)^2}{4\,(K_n - h_{n+1}^{(2)}\,\gamma_n\,\Delta t)}\,\Delta t\label{eqn:prf_BSDeltaE2}
	\end{align}
\end{subequations}
with terminal conditions $h_N^{(0)} = h_N^{(1)} = 0$ and $h_N^{(2)} = -B$. The optimizer $\pi^*_n$ in \eqref{eqn:pi_proof} does not depend on the process $h^{(0)}$, so we require only the recursion equations \eqref{eqn:prf_BSDeltaE1} and \eqref{eqn:prf_BSDeltaE2} for $h^{(1)}$ and $h^{(2)}$, which are identical to those in equation \eqref{eqn:BSDeltaE} as in the statement of the proposition. \qed

\subsection{Proof of Proposition \ref{prop:continuous_optimal_exploration}}
\label{proof:exploratory_cont}

The HJB equation with terminal condition associated with the dynamic value function $H^{(\lambda)}$ is
\begin{equation}\label{eqn:HJB_pf}
	\begin{split}
		\partial_tH^{(\lambda)} + \mathcal{L}^{\widehat{A}}H^{(\lambda)} + \widehat{A}\,\partial_xH^{(\lambda)} + \frac{1}{2}\,\sigma^2\,\partial_{xx}H^{(\lambda)} + \sigma\,v\,\partial_{x\widehat{A}}H^{(\lambda)} - C_t\,x^2 \hspace{30mm} 
  \\
	 + \sup_{\pi_t\in\mathcal{A}^r_t}\biggl\{(\gamma_t\,\partial_xH^{(\lambda)} + D_t\,x)\int_{-\infty}^\infty \nu\,\pi_t(\nu)\,d\nu - K_t\int_{-\infty}^\infty\nu^2\,\pi_t(\nu)\,d\nu + \lambda\,S_q[\pi_t]\biggr\} &= 0\,, 
  \\
		 H^{(\lambda)}(T,x,\widehat{A}) &= g(x)\,.
	\end{split}
\end{equation}

This PDE can be rewritten in the more convenient form
\begin{equation}\label{eqn:HJB_pf2}
	\begin{split}
		\partial_tH^{(\lambda)} + \mathcal{L}^{\widehat{A}}H^{(\lambda)} + \widehat{A}\,\partial_xH^{(\lambda)} + \frac{1}{2}\,\sigma^2\,\partial_{xx}H^{(\lambda)} + \sigma\,v\,\partial_{x\widehat{A}}H^{(\lambda)} - C_t\,x^2 + \frac{\lambda}{q-1} \hspace{20mm}\\
		 +\, \frac{(\gamma_t\,\partial_xH^{(\lambda)} + D_t\,x)^2}{4\,K_t} + \sup_{\pi_t\in\mathcal{A}^r_t}\biggl\{\int_{-\infty}^\infty \biggl[-K_t \,\biggl(\nu - \frac{\gamma_t\,\partial_xH^{(\lambda)} + D_t\,x}{2\,K_t}\biggr)^2 - \frac{\lambda}{q-1}\,\pi_t^{q-1}(\nu)\biggr]\,\pi_t(\nu)\,d\nu \biggr\} &= 0\,.
	\end{split}
\end{equation}
Computing the supremum term above is similar to the computation performed in the proof of Proposition \ref{prop:discrete_optimal_exploration}. We repeat it here, but this time for the case $1/3<q<1$. To this end, introduce a Lagrange multiplier $\psi_t$ to enforce the constraint $\int_{-\infty}^\infty \pi_t(\nu)\,d\nu=1$ and for each $\nu$ introduce a KKT multiplier $\zeta_t(\nu)$ for the constraint $\pi_t(\nu)\geq 0$. The Lagrangian associated with this (strictly convex) constrained maximization is
\begin{equation}
    \begin{split}
        \mathcal{L}(\pi_t;\psi_t,\zeta_t) &= \int_{-\infty}^\infty \biggl[-K_t \,\biggl(\nu - \frac{L_t}{2\,K_t}\biggr)^2 - \frac{\lambda}{q-1}\,\pi_t^{q-1}(\nu)\biggr]\,d\nu\\
        & \hspace{10mm} - \psi_t\biggl(\int_{-\infty}^\infty \pi_t(\nu)\,d\nu - 1\biggr) + \int_{-\infty}^\infty \zeta_t(\nu)\,\pi_t(\nu)\,d\nu\,,
    \end{split}    
\end{equation}
where $L_t = \gamma_t\,\partial_xH^{(\lambda)} + D_t\,x$. The Gateaux derivative of $\mathcal{L}$ with respect to $\pi_t$ is given by
\begin{equation}
    \begin{split}
        <D\mathcal{L}(\pi_n;\psi_n,\zeta_n), f> &= \int_{-\infty}^\infty f(\nu)\,\biggl( \zeta_t(\nu) - \psi_t - \frac{\lambda\,q}{q-1}\pi_t^{q-1}(\nu) - K_t\biggl(\nu - \frac{L_t}{2\,K_t}\biggr)^2    \biggr)\,d\nu
    \end{split}    
\end{equation}
To attain a maximizer, first order conditions imply that the conditional density $\pi_t$, Lagrange multiplier $\psi_t$, and KKT multiplier $\zeta_t$ must satisfy
\begin{align}
    \pi_t^*(\nu) &= \biggl(\frac{1-q}{\lambda\,q}\biggr)^{\frac{1}{q-1}}\biggl(-\zeta_t(\nu) + \psi_t + K_t\biggl(\nu - \frac{L_t}{2\,K_t}\biggr)^2\biggr)^{\frac{1}{q-1}}\,.
\end{align}
Since $q<1$ we have $\pi_t^*(\nu) \neq 0$ and so we conclude $\zeta_t(\nu) = 0$, which reduces the density to
\begin{align}\label{eqn:cont_density_pf}
    \pi_t^*(\nu) &= \biggl(\frac{1-q}{\lambda\,q}\biggr)^{\frac{1}{q-1}}\biggl(\psi_t + K_t\biggl(\nu - \frac{L_t}{2\,K_t}\biggr)^2\biggr)^{\frac{1}{q-1}}\,.
\end{align}
Then $\psi_t$ is chosen to enforce the constraint $\int_{-\infty}^\infty \pi_t(\nu)\,d\nu$. A substitution allows this integral to be rewritten as
\begin{align*}
    \int_{-\infty}^\infty \pi_t(\nu)\,d\nu &= \biggl(\frac{1-q}{\lambda\,q}\biggr)^{\frac{1}{q-1}}\frac{\psi_t^{\frac{1}{q-1} + \frac{1}{2}}}{\sqrt{K_t}}\,\int_{-\infty}^\infty (1+x^2)^{\frac{1}{q-1}}\,dx\\
    &= \biggl(\frac{1-q}{\lambda\,q}\biggr)^{\frac{1}{q-1}}\frac{\psi_t^{\frac{1}{q-1} + \frac{1}{2}}}{\sqrt{K_t}}\,\frac{\sqrt{\pi}\,\Gamma(\frac{1}{1-q}-\frac{1}{2})}{\Gamma(\frac{1}{1-q})}\,,
\end{align*}
where the integral with respect to $x$ can be found in \cite{gradshteyn2014table} Section 3.251. Solving for $\psi_t$ gives
\begin{align*}
    \psi_t &= \biggl[\frac{\Gamma(\frac{1}{1-q})}{\sqrt{\pi}\,\Gamma(\frac{1}{1-q} - \frac{1}{2})}\,\biggl(\frac{\lambda\,q}{1-q}\biggr)^{\frac{1}{q-1}}\sqrt{K_t}\biggr]^{\frac{1}{\frac{1}{q-1} + \frac{1}{2}}}\,,
\end{align*}
as in the statement of the Proposition. Substituting the density from \eqref{eqn:cont_density_pf} back into supremum term in \eqref{eqn:HJB_pf2} and referring to \cite{gradshteyn2014table} Section 3.251 to compute the integrals gives
\begin{equation}
    \begin{split}
        &\sup_{\pi_t\in\mathcal{A}^r_t}\biggl\{\int_{-\infty}^\infty \biggl[-K_t \,\biggl(\nu - \frac{\gamma_t\,\partial_xH^{(\lambda)} + D_t\,x}{2\,K_t}\biggr)^2 - \frac{\lambda}{q-1}\,(\pi_t^*(\nu))^{q-1}\biggr]\,\pi^*_t(\nu)\,d\nu \biggr\}\\
        =& -\biggl(\frac{1-q}{\lambda\,q}\biggr)^{\frac{1}{q-1}}\psi_t^{\frac{1}{q-1}+\frac{3}{2}}\frac{\sqrt{\pi}\,\Gamma(\frac{1}{1-q}-\frac{3}{2})}{\sqrt{K_t}}\biggl(\frac{1}{2\,\Gamma(\frac{1}{1-q})} - \frac{1}{(1-q)\,\Gamma(\frac{1}{1-q})}\biggr)\\
        =& \frac{q+1}{3\,q-1}\,\psi_t\,.
    \end{split}    
\end{equation}
Thus the HJB equation with terminal condition becomes
\begin{equation}\label{eqn:HJB_pf3}
	\begin{split}
		\partial_tH^{(\lambda)} + \mathcal{L}^{\widehat{A}}H^{(\lambda)} + \widehat{A}\,\partial_xH^{(\lambda)} + \frac{1}{2}\,\sigma^2\,\partial_{xx}H^{(\lambda)} \sigma\,v\,\partial_{x\widehat{A}}H^{(\lambda)}  \hspace{10mm}\\
        -\, C_t\,x^2 + \frac{\lambda}{q-1} + \frac{(\gamma_t\,\partial_xH^{(\lambda)} + D_t\,x)^2}{4\,K_t} + \frac{q+1}{3\,q-1}\,\psi_t &= 0\,,\\
        H^{(\lambda)}(T,x,\widehat{A}) &= g(x)\,.
	\end{split}
\end{equation}
Given that $H^{(0)}$ is a classical solution to the HJB equation in \eqref{eqn:HJB_standard}, we see that a classical solution to \eqref{eqn:HJB_pf3} is obtained by
\begin{align}
    H^{(\lambda)}(t,x,\widehat{A}) &= H^{(0)}(t,x,\widehat{A}) - \int_t^T\alpha_q(u)\,du\,,
\end{align}
with $\alpha_q(t) = - \frac{\lambda}{q-1} - \frac{\psi_t\,(q+1)}{3\,q-1}$ as given in the statement of the proposition. Finally, the feedback form of the conditional density \eqref{eqn:cont_density_pf} results in the optimal control \eqref{eqn:pi_star_cont} for $\frac{1}{3} < q < 1$ as in the proposition. Other values of $q$ are handled similarly with minor modifications to the computations. \qed

\subsection{Proof of Lemma \ref{lem:convergence}}
\label{proof:convergence}

This proof is a combination of two results related to BSDEs. The first is that of approximating the solution to a BSDE with the solution to an appropriately chosen BS$\Delta$E. The second concerns the continuity of solutions to BSDEs with respect to a parameter. The condition
\begin{align*}
    \inf_{\substack{t\in[0,T] \\ \beta\in[0,\beta_0)}}\biggl\{K_t - h_t^{(2,\beta)}\,\gamma_t^2\,\beta\biggr\}>0\,,
\end{align*}
assumed in the statement of the proposition ensures that the BSDEs considered below have coefficients that guarantee solutions exists, in particular the coefficients have Lipschitz constants which can be chosen uniformly with respect to $\beta\in[0,\beta_0)$.

Fix $\epsilon > 0$. First, consider the BS$\Delta$E \eqref{eqn:BSDeltaE} (repeated here for convenience)
\begin{equation}
    \begin{split}
        h^{(1,N)}_n =&\; \E[h^{(1,N)}_{n+1}|\mcF_n] + 2\,h^{(2,N)}_{n+1}\,\widehat{A}_n\,\Delta t 	\\
        &\;+ \frac{(\E[h^{(1,N)}_{n+1}|\mcF_n]\,\gamma_n + 2\,h^{(2,N)}_{n+1}\,\widehat{A}_n\,\gamma_n\,\Delta t)\,(2\,h^{(2,N)}_{n+1}\,\gamma_n + D_n)}{2\,(K_n - h^{(2,N)}_{n+1}\,\gamma_n^2\,\Delta t)}\,\Delta t\,,\\
        h^{(2,N)}_n =&\; h^{(2,N)}_{n+1} - C_n\,\Delta t + \frac{(2\,h^{(2,N)}_{n+1}\,\gamma_n + D_n)^2}{4\,(K_n - h^{(2,N)}_{n+1}\,\gamma_n^2\,\Delta t)}\,\Delta t\,,
    \end{split}
\end{equation}
and note that occurrences of $\Delta t$ take two distinct roles. Some appearances act as a time step parameter as would appear in a finite-difference scheme, and others appear as a parameter within the coefficients. All occurrences of the second type are replaced below with a parameter $\beta\in[0,\beta_0)$ to generate the new parametrized BS$\Delta$E system
\begin{equation}
    \label{eqn:BSDeltaE_pf}
    \begin{split}
        h^{(1,N,\beta)}_n =&\; \E[h^{(1,N,\beta)}_{n+1}|\mcF_n] + 2\,h^{(2,N,\beta)}_{n+1}\,\widehat{A}_n\,\Delta t 	\\
        &\;+ \frac{(\E[h^{(1,N,\beta)}_{n+1}|\mcF_n]\,\gamma_n + 2\,h^{(2,N,\beta)}_{n+1}\,\widehat{A}_n\,\gamma_n\,\beta)\,(2\,h^{(2,N,\beta)}_{n+1}\,\gamma_n + D_n)}{2\,(K_n - h^{(2,N,\beta)}_{n+1}\,\gamma_n^2\,\beta)}\,\Delta t\,,	\\
        h^{(2,N,\beta)}_n =&\; h^{(2,N,\beta)}_{n+1} - C_n\,\Delta t + \frac{(2\,h^{(2,N,\beta)}_{n+1}\,\gamma_n + D_n)^2}{4\,(K_n - h^{(2,N,\beta)}_{n+1}\,\gamma_n^2\,\beta)}\,\Delta t\,,
    \end{split}
\end{equation}
and we note that $h^{(i,N,\Delta t)} = h^{(i,N)}$. Additionally, consider a new system of BSDEs (which is the continuous time analogue of \eqref{eqn:BSDeltaE_pf})
\begin{equation}\label{eqn:BSDE_beta_pf}
	\begin{split}
		dh^{(1,\beta)}_t &= -\biggl(2\,\widehat{A}_t\,h_t^{(2,\beta)} + \frac{(\gamma_t\,h_t^{(1,\beta)} + 2\,h^{(2,\beta)}_{t}\,\widehat{A}_t\,\gamma_t\,\beta)\,(2\,\gamma_t\,h_t^{(2,\beta)} + D_t)}{2\,(K_t - h^{(2,\beta)}_{t}\,\gamma_t^2\,\beta)}\biggr)\,dt + dZ_t\,,\\
		dh^{(2,\beta)}_t &= -\biggl(\frac{(2\,\gamma_t\,h_t^{(2,\beta)} + D_t)^2}{4\,(K_t - h^{(2,\beta)}_{t}\,\gamma_t^2\,\beta)} - C_t\biggr)\,dt\,.
	\end{split}
\end{equation}
From Theorem 6 in \cite{gobet2007error}, there exists a sufficiently large $N^*$, which can be chosen uniformly in $\beta\in[0,\beta_0)$, such that for $N\geq N^*$
\begin{align*}
    \max_{0\leq n\leq N}\mathbb{E}\biggl[(h^{(1,\beta)}_{t_n} - h^{(1,N,\beta)}_n)^2\biggr] + \max_{0\leq n\leq N}(h^{(2,\beta)}_{t_n} - h^{(2,N,\beta)}_n)^2 & < \epsilon/2\,.
\end{align*}
Next, treating $h^{(i,\beta)}$ as a family of solutions to \eqref{eqn:BSDE_beta_pf} that depend on the parameter $\beta$, from \cite{el1997backward}, there exists sufficiently small $\beta^*$ such that
\begin{align*}
    & \hspace*{-2em} \max_{0\leq n\leq N}\mathbb{E}\biggl[(h^{(1,\beta^*)}_{t_n} - h^{(1,0)}_{t_n})^2\biggr] + \max_{0\leq n\leq N}(h^{(2,\beta^*)}_{t_n} - h^{(2,0)}_{t_n})^2\\
    \leq & \sup_{0\leq t\leq T}\mathbb{E}\biggl[(h^{(1,\beta^*)}_t - h^{(1,0)}_t)^2\biggr] + \sup_{0\leq t\leq T}(h^{(2,\beta^*)}_t - h^{(2,0)}_t)^2\\
    \leq &\; \mathbb{E}\biggl[\sup_{0\leq t\leq T}(h^{(1,\beta^*)}_t - h^{(1,0)}_t)^2\biggr] + \sup_{0\leq t\leq T}(h^{(2,\beta^*)}_t - h^{(2,0)}_t)^2\\
    \leq &\; \epsilon/2\,.
\end{align*}
Thus, by choosing $\Delta t < \min(\beta^*,T/N^*)$ and combining the previous inequalities, we have
\begin{align*}
    \max_{0\leq n\leq N}\mathbb{E}\biggl[(h^{(1,0)}_{t_n} - h^{(1,N,\Delta t)}_n)^2\biggr] + \max_{0\leq n\leq N}(h^{(2,0)}_{t_n} - h^{(2,N,\Delta t)}_n)^2 & < \epsilon\,,
\end{align*}
which is the desired result.
\qed

\subsection{Derivation of $\pi^*(\nu|x)$ in \eqref{eqn:pi_star_Q}}
\label{proof:pi_star_Q}

We present the necessary computations for $q>1$. Other cases follow similarly. For fixed $x\in\mathcal{X}$, the quantity to be maximized is
\begin{align*}
    Q(x;\pi) &= \sum_{\nu\in\mcA} Q(x;\delta_\nu)\,\pi(\nu|x) + \frac{\lambda}{q-1}\biggl(1 - \sum_{\nu\in\mcA} (\pi(\nu|x))^q\biggr)\,,
\end{align*}
subject to $\sum_{\nu\in\mathcal{A}} \pi(\nu|x) = 1$ and $\pi(\nu|x)\geq 0$. This is a finite dimensional constrained maximization with respect to the quantity $\pi(\nu|x)$. A Lagrange multiplier $\psi(x)$ is introduced for the equality constraint and for each $\nu$ a KKT multiplier $\zeta(\nu;x)$ is introduced for the inequality constraint. The corresponding Lagrangian is
\begin{align*}
    \mathcal{L}(\pi(\nu|x);x,\psi(x)) &= \sum_{\nu\in\mcA} Q(x;\delta_\nu)\,\pi(\nu|x) + \frac{\lambda}{q-1}\biggl(1 - \sum_{\nu\in\mcA} (\pi(\nu|x))^q\biggr)\\ 
    & \hspace{15mm} + \psi(x)\biggl(\sum_{\nu\in\mathcal{A}} \pi(\nu|x)-1\biggr) + \sum_{\nu\in\mathcal{A}}\zeta(\nu;x)\,\pi(\nu|x)\,.
\end{align*}
First order conditions with respect to $\pi(\nu|x)$ imply
\begin{align*}
    Q(x;\delta_\nu) - \frac{\lambda\,q}{q-1}(\pi(\nu|x))^{q-1} + \psi(x) + \zeta(\nu;x) &= 0\,,
\end{align*}
for each $\nu\in\mathcal{A}$. This is rewritten as
\begin{align}
    \pi(\nu|x) &= \biggl(\frac{q-1}{\lambda\,q}\biggr)^{\frac{1}{q-1}}\biggl(\zeta(\nu;x) + \psi(x) + Q(x;\delta_\nu)\biggr)^{\frac{1}{q-1}}\,.\label{eqn:Q_pi_pf1}
\end{align}
The conditions $\zeta(\nu;x)\geq 0$, $\pi(\nu|x)\geq 0$, and $\zeta(\nu;x)\,\pi(\nu|x)=0$ imply
\begin{align*}
    \zeta(\nu;x) &= [-\psi(x) - Q(x;\delta_\nu)]_+\,,
\end{align*}
which when substituted back into \eqref{eqn:Q_pi_pf1} give
\begin{align*}
    \pi^*(\nu|x) &= \biggl(\frac{q-1}{\lambda\,q}\biggr)^{\frac{1}{q-1}}\biggl[\psi(x) + Q(x;\delta_\nu)\biggr]_+^{\frac{1}{q-1}}\,.
\end{align*}
Finally, the quantity $\psi(x)$ is chosen so that $\sum_{\nu\in\mathcal{A}} \pi(\nu|x) = 1$. This is always possible because the expression above is continuous and increasing with respect to $\psi(x)$, and attains all values from $0$ to infinity.

 \section*{References}

\bibliographystyle{chicago}
\bibliography{References}

\end{document}